\def\F{{\mathbb F}}
\def\Z{{\mathbb Z}}
\def\diag{{\rm diag}}
\def\onemat{{\mathbf 1}}
\def\zeromat{{\mathbf 0}}
\def\GL{{\mathrm GL(n,\F_2)}}
\def\Sp{{\mathrm Sp}(2n,\F_2)}
\global\long\def\H{\textsc{H}}
\global\long\def\P{\textsc{P}}
\global\long\def\Z{\textsc{Z}}
\global\long\def\CNOT{\textsc{CNOT}}
\global\long\def\CZ{\textsc{CZ}}
\long\def\dummy#1{}
\newtheorem{theorem}{Theorem}
\newtheorem{definition}[theorem]{Definition}
\newtheorem{lemma}[theorem]{Lemma}
\newtheorem{corollary}[theorem]{Corollary}
\newcommand{\ket}[1]{{\left\vert{#1}\right\rangle}}
\newcommand{\qw}[1][-1]{\ar @{-} [0,#1]}
\newcommand{\qwx}[1][-1]{\ar @{-} [#1,0]}
\newcommand{\gate}[1]{*+<.6em>{#1} \POS ="i","i"+UR;"i"+UL **\dir{-};"i"+DL **\dir{-};"i"+DR **\dir{-};"i"+UR **\dir{-},"i" \qw}
\newcommand{\control}{*!<0em,.025em>-=-<.2em>{\bullet}}
\newcommand{\ctrl}[1]{\control \qwx[#1] \qw}
\newcommand{\targ}{*+<.02em,.02em>{\xy ="i","i"-<.39em,0em>;"i"+<.39em,0em> **\dir{-}, "i"-<0em,.39em>;"i"+<0em,.39em> **\dir{-},"i"*\xycircle<.4em>{} \endxy} \qw}
\newcommand{\lstick}[1]{*!R!<.5em,0em>=<0em>{#1}}
\newcommand{\Qcircuit}{\xymatrix @*=<0em>}
\title{Shorter stabilizer circuits via Bruhat decomposition and quantum circuit transformations}
\author{%
Dmitri Maslov and Martin Roetteler
  \thanks{D.~Maslov is with the National Science Foundation, Alexandria, VA 22314, USA. M.~Roetteler is with Microsoft Research, Redmond, WA 98052, USA.}
\thanks{This material was based on work supported by the National Science Foundation, while DM working at the Foundation. Any opinion, finding, and conclusions or recommendations expressed in this material are those of the author and do not necessarily reflect the views of the National Science Foundation.}
}
\begin{document}

\maketitle 

\begin{abstract}
In this paper we improve the layered implementation of arbitrary stabilizer circuits introduced by Aaronson and Gottesman in Phys. Rev. A 70(052328), 2004: to obtain a general stabilizer circuit, we reduce their 11-stage computation -H-C-P-C-P-C-H-P-C-P-C- over the gate set consisting of Hadamard, Controlled-NOT, and Phase gates, into a $7$-stage computation of the form -C-CZ-P-H-P-CZ-C-.  We show arguments in support of using -CZ- stages over the -C- stages: not only the use of -CZ- stages allows a shorter layered expression, but -CZ- stages are simpler and appear to be easier to implement compared to the -C- stages.  Based on this decomposition, we develop a two-qubit gate depth-$(14n{-}4)$ implementation of stabilizer circuits over the gate library $\{\H,\P,\CNOT\}$, executable in the Linear Nearest Neighbor (LNN) architecture, improving best previously known depth-$25n$ circuit, also executable in the LNN architecture.  Our constructions rely on Bruhat decomposition of the symplectic group and on folding arbitrarily long sequences of the form $($-P-C-$)^m$ into a 3-stage computation -P-CZ-C-.  Our results include the reduction of the $11$-stage decomposition -H-C-P-C-P-C-H-P-C-P-C- into a $9$-stage decomposition of the form -C-P-C-P-H-C-P-C-P-.  This reduction is based on the Bruhat decomposition of the symplectic group.  This result also implies a new normal form for stabilizer circuits.  We show that a circuit in this normal form is optimal in the number of Hadamard gates used.  We also show that the normal form has an asymptotically optimal number of parameters.

\end{abstract}


\maketitle

\section{Introduction}
Stabilizer circuits are of particular interest in quantum information processing (QIP) due to their prominent role in fault tolerance \cite{ar:crss, www:g, bk:nc, ar:s}, the study of entanglement \cite{ar:bdsw, bk:nc}, and in evaluating quantum information processing proposals via randomized benchmarking \cite{ar:klr}, to name a few.  

Stabilizer circuits are composed of the Hadamard gate $\H$, Phase gate $\P$, and the controlled-NOT gate $\CNOT$ defined as
\[
\H:=\text{\footnotesize $\frac{1}{\sqrt{2}}$}\left[\begin{smallmatrix}
1 & 1\\
1 & -1
\end{smallmatrix}\right],\; \P:=\left[\begin{smallmatrix} 1 & 0 \\ 0 & i \end{smallmatrix}\right],\; \text{and } \CNOT:=\left[\begin{smallmatrix} 1 & 0 & 0 & 0 \\ 0 & 1 & 0 & 0 \\ 0 & 0 & 0 & 1 \\ 0 & 0 & 1 & 0 \end{smallmatrix}\right],
\]
where in case of $\H$ and $\P$ each of these gates is allowed to act on any of a given number $n$ of qubits, and on any pair of qubits in case of the $\CNOT$ gate. 

The stabilizer circuits over $n$ qubits, such as defined above form a finite group which is known to be equivalent \cite{ar:crss,ar:crss2} to the group of binary $2n \times 2n$ symplectic matrices, $\Sp$.  Knowing this equivalence allows to evaluate the stabilizer group size, through employing the well-known formula to calculate the number of elements in the respective symplectic group, $$|\Sp|={2^{n^2}\prod\limits_{j=1}^{n}(2^{2j}-1)}=2^{2n^2+O(n)}.$$ 

In this paper, we rely on the phase polynomial representation of $\{\P,\CNOT\}$ circuits.  Specifically, arbitrary quantum circuits over \textsc{P} and \textsc{CNOT} gates can be described in an alternate form, which we refer to as {\em phase polynomial description}, and vice versa, each phase polynomial description can be written as a \textsc{P} and \textsc{CNOT} gate circuit.  We use this result to induce circuit transformations via rewriting the respective phase polynomials.  We adopt the phase polynomial expression result from \cite{ar:ammr} to this paper as follows:

\begin{theorem}\label{thm:0}
Any circuit $C$ on $n$ qubits over $\{\P,\CNOT\}$ library with $k$ Phase gates can be described by the combination of a phase polynomial $p(x_1, x_2, ..., x_n)=f_1(x_1, x_2, ..., x_n) + f_2(x_1, x_2, ..., x_n) + \cdots + f_k(x_1, x_2, ..., x_n)$ and a linear reversible function $g(x_1, x_2, ..., x_n)$, such that the action of $C$ can be constructed as 
\[
C|x_1x_2...x_n\rangle = i^{p(x_1, x_2, ..., x_n)}|g(x_1, x_2, ..., x_n)\rangle, 
\]
where $i$ denotes the complex imaginary unit. Functions $f_j$ corresponding to the $j^{\text{th}}$ Phase gate are obtained from the circuit $C$ via devising Boolean linear functions computed by the \textsc{CNOT} gates in the circuit $C$ leading to the position of the respective Phase gate. 
\end{theorem}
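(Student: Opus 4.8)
The plan is to proceed by induction on the number of gates in $C$, maintaining as an invariant a description of the partially applied circuit in exactly the claimed form. Writing $C = G_\ell \cdots G_2 G_1$ as a temporally ordered sequence of $\ell$ gates (so that $G_1$ acts first), I would show that for each $t$ the state obtained after the first $t$ gates can be written as $i^{p_t(x)}\ket{g_t(x)}$, where $g_t$ is a linear reversible function of $x=(x_1,\dots,x_n)$ and $p_t$ is an integer-valued sum of Boolean linear forms, one summand contributed by each Phase gate among $G_1,\dots,G_t$. The base case $t=0$ is the identity channel: the state is $\ket{x}$, so we take $p_0=0$ and $g_0$ the identity map, both trivially of the required shape.

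For the inductive step I would split on the type of the appended gate $G_{t+1}$. If $G_{t+1}$ is a \CNOT\ with control $a$ and target $b$, then applying it to $i^{p_t(x)}\ket{g_t(x)}$ leaves the global phase untouched and replaces the stored linear function by its composition with the elementary map $y_b \mapsto y_b \oplus y_a$; hence $p_{t+1}=p_t$ and $g_{t+1}=G_{t+1}\circ g_t$, which is again linear reversible because a product of invertible $\F_2$-linear maps is invertible. If instead $G_{t+1}=\P$ acts on qubit $b$, then since $\P=\diag(1,i)$ it multiplies $\ket{g_t(x)}$ by $i$ raised to the $b$-th coordinate of $g_t(x)$; that coordinate is precisely the Boolean linear function carried on wire $b$ at this point, so I set $f_{t+1}(x):=\bigl(g_t(x)\bigr)_b$, update $p_{t+1}=p_t+f_{t+1}$, and leave $g_{t+1}=g_t$ unchanged. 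Iterating to $t=\ell$ yields $g=g_\ell$ and $p=\sum_j f_j$ with one summand $f_j$ per Phase gate, each equal to the linear form computed by the \CNOT\ gates preceding that gate, which is exactly the assignment described in the statement.

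The single point that needs genuine care, and where I would spend the most words, is the arithmetic of the exponent. Each $f_j$ takes values in $\{0,1\}$, but the running sum $p$ must be read as an \emph{integer} sum rather than a sum in $\F_2$: two Phase gates applied to the same linear form contribute $i^{2f}=(-1)^f$, matching $\P^2=\Z$, so collapsing the sum modulo $2$ would be incorrect. The exponent is instead well-defined modulo $4$, which suffices since $i^4=1$. Everything else is bookkeeping: one checks that the two kinds of update never interfere, since a \CNOT\ acts only on the ket while a Phase gate acts only on the global phase, and that the accumulated composition defining $g$ stays inside the linear reversible group. I would not expect a true obstacle here; the content of the theorem is the clean separability of the phase data $p$ from the linear data $g$, and the induction makes this separation manifest.
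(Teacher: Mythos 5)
Your proposal is correct, and the induction you give is the standard argument. Note that the paper itself offers no proof of this theorem --- it is imported verbatim from the cited reference \cite{ar:ammr} --- so there is nothing to diverge from; your gate-by-gate induction, with the key observation that the exponent must be accumulated as an integer modulo $4$ rather than in $\F_2$, is exactly how the result is established in the literature.
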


In the following we focus on finding a short layered sequence of gates capable of representing an arbitrary stabilizer circuit over $n$ primary inputs.  The layers are defined as follows: 
\begin{itemize}
\item -H- layer contains all unitaries representable by arbitrary circuits composed of the Hadamard gates. Since $\H^2=Id$, and Hadamard gate is a single-qubit gate, -H- layer has zero or one gates acting on each of the respective qubits.  The number of distinct layers -H- on $n$ qubits is thus $2^n$.  We say -H- has $n$ Boolean degrees of freedom.
\item -P- layer is composed of an arbitrary set of Phase gates.  Since $\P^4=Id$, and the Phase gate is also a single-qubit gate, -P- layer has anywhere between zero to three gates on each of the respective qubits.  Note that $\P^2=\Z$, and therefore the gate sequence $\P\P$ may be better implemented as the Pauli-$\Z$ gate; $\P^3=\P^\dagger$, and frequently $\P^\dagger$ is constructible with the same cost as $\P$.  This means that the -P- layer is essentially analogous to the -H- layer in the sense that it consists of at most $n$ individual single-qubit gates.  The number of different unitaries represented by -P- layers on $n$ qubits is $2^{2n}$.  We say -P- has $2n$ Boolean degrees of freedom.  
\item -C- layer contains all unitaries computable by the $\CNOT$ gates.  The number of different -C- layers corresponds to the number of affine linear reversible functions, and it is equal to $\prod\limits_{j=0}^{n-1}(2^n-2^j)=2^{n^2+O(n)}$ \cite{ar:pmh}.  We say -C- has $n^2+O(n)$ Boolean degrees of freedom.
\item -CZ- layer contains all unitaries computable by the $\CZ$ gates, where $\CZ$ gate is defined as
\[\CZ:=\left[\begin{smallmatrix} 1 & 0 & 0 & 0 \\ 0 & 1 & 0 & 0 \\ 0 & 0 & 1 & 0 \\ 0 & 0 & 0 & -1 \end{smallmatrix}\right]
.\]  Since all $\CZ$ gates commute, and due to $\CZ$ being self-inverse, {\em i.e.}, $\CZ^2=Id$, the number of different unitaries computable by -CZ- layers is $\prod\limits_{j=1}^{n}2^{n-j}=2^{\frac{n^2}{2}+O(n)}$.  We say -CZ- has $\frac{n^2}{2}+O(n)$ Boolean degrees of freedom.
\end{itemize} 

Observe that the above count of the degrees of freedom suggests that -P- and -H- layers are ``simple''.  Indeed, each requires no more than the linear number of single-qubit gates to be constructed via a circuit.  The number of the degrees of freedom in -C- and -CZ- stages is quadratic in $n$.  Other than the two-qubit gates often being more expensive than the single-qubit gates \cite{ar:deb,www:IBM}, the comparison of the degrees of freedom suggests that we will need more of the respective gates to construct each such stage.  The -CZ- layer has roughly half the number of the degrees of freedom compared to the -C- layer.  We may thus reasonably expect that the -CZ- layer can be easier to obtain.  

Unlike the -C- circuits, the problem of optimizing -CZ- circuits does not seem to have been studied in the literature.  Part of the reason could be due to the $\CZ$ gate complexity of -CZ- circuits being a very inconspicuous problem to study: indeed, worst case optimal circuit has $\frac{(n-1)n}{2}$ $\CZ$ gates, and optimal circuits are easy to construct, as they are determined by the presence or lack of $\CZ$ gates acting on the individual pairs of qubits.  However, we claim that using only $\CZ$ gates to construct -CZ- layer is not the best solution, and a better approach would be to also employ the $\CNOT$ and $\P$ gates.  Indeed, both $\CNOT$ and $\CZ$ gates must have a comparable cost of the implementation, since they are related by the formula $\CNOT(a,b)=\H(b)\CZ(a,b)\H(b)$, and single-qubit gates are ``easy'' \cite{ar:deb, www:IBM}.  $\CZ$ is furthermore the elementary gate in superconducting circuits QIP \cite{ar:ggz}, and as such, technically, it costs less than the $\CNOT$, and in the trapped ions QIP the costs of the two are comparable \cite{ar:m}.  Further discussion of the relation of implementation costs between -C- and -CZ- layers is postponed to Section \ref{sec:cvscz}.

The different layers can be interleaved to obtain stabilizer circuits not computable by a single layer.  A remarkable result of \cite{ar:ag} shows that 11 stages over a computation of the form -H-C-P-C-P-C-H-P-C-P-C- suffices to compute an arbitrary stabilizer circuit.  The number of Boolean degrees of freedom in the group of stabilizer unitaries, defined as the logarithm base-2 of their total count, is given by the formula $\log_2{|\Sp|}=2n^2+O(n)$.  This suggests that the 11-stage circuit by Aaronson and Gottesman \cite{ar:ag} is suboptimal, as it relies on $5n^2+O(n)$ degrees of freedom, whereas only $2n^2+O(n)$ are necessary.  Indeed, we find (Section \ref{sec:6}) a shorter 9-stage decomposition of the form -P-C-P-C-H-C-P-C-P- in which all -C- stages correspond to upper triangular matrices having $n^2/2$ degrees of freedom each, leading to an asymptotically tight parameterization of all stabilizer circuits. 

{\it Notation.} We denote with $\GL$ the group of invertible $n\times n$ matrices, with $S_n$ the full permutation group on $n$ letters, and with $\diag(A,B)$ the (block) diagonal operator that has diagonal elements $A$ and $B$.

\section{$($-P-C-$)^m$ circuits}

In this section we show that an arbitrary length $n$-qubit computation described by the stages -P-C-P-C-...-P-C- folds into an equivalent three-stage computation -P-CZ-C-.

\begin{theorem}\label{thm:1}
$(\text{-P-C-})^m=\text{-P-CZ-C-}$.
\end{theorem}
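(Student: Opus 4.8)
The plan is to pass to the phase-polynomial picture of Theorem \ref{thm:0} and to show that, modulo $4$, the phase polynomial of any $\{\P,\CNOT\}$ circuit has no terms of degree exceeding two. First I would apply Theorem \ref{thm:0} to the circuit $(\text{-P-C-})^m$, which is a circuit over $\{\P,\CNOT\}$: this yields a phase polynomial $p(x_1,\dots,x_n)=\sum_{j}\ell_j(x_1,\dots,x_n)$ together with a linear reversible map $g\in\GL$ such that the circuit sends $|x\rangle$ to $i^{p(x)}|g(x)\rangle$. Here each $\ell_j$ is the Boolean linear form (an XOR of a subset of the inputs) computed by the $\CNOT$ gates feeding the $j$-th Phase gate, and it contributes the factor $i^{\ell_j(x)}$.

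The key step is to re-expand each lifted XOR as an integer polynomial and reduce modulo $4$. Writing a linear form with support $T$ as $\bigoplus_{t\in T}x_t$ and using the identity
\[
\bigoplus_{t\in T}x_t=\sum_{\emptyset\neq S\subseteq T}(-2)^{|S|-1}\prod_{t\in S}x_t,
\]
valid for $x_t\in\{0,1\}$, every monomial of degree $|S|\ge 3$ carries a coefficient divisible by $(-2)^2=4$ and therefore vanishes in the exponent of $i$. Hence, modulo $4$,
\[
p(x)\equiv\sum_{t}c_t\,x_t+2\sum_{t<t'}d_{tt'}\,x_t x_{t'},
\]
with $c_t\in\{0,1,2,3\}$ and $d_{tt'}\in\{0,1\}$ obtained by accumulating the contributions of the individual forms $\ell_j$. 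This collapse of all higher-degree terms is what forces the long alternating product to fold into a constant number of stages, and establishing it cleanly is the conceptual crux of the argument.

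Finally I would read off the three stages. The linear part is realized by $\P^{c_t}$ acting on qubit $t$, so $\sum_t c_t x_t$ corresponds to a single -P- layer. The quadratic part contributes $i^{2 d_{tt'} x_t x_{t'}}=(-1)^{d_{tt'}x_t x_{t'}}$, which is precisely a $\CZ$ gate on the pair $(t,t')$ whenever $d_{tt'}=1$; hence $2\sum_{t<t'}d_{tt'}x_t x_{t'}$ corresponds to a single -CZ- layer. The linear reversible map $g$ is realized by a -C- layer by definition. Because the -P- and -CZ- layers are diagonal they commute, and both act on the input register before $g$ permutes the basis vectors, so the composite is exactly $\text{-P-CZ-C-}$, matching the action $|x\rangle\mapsto i^{p(x)}|g(x)\rangle$. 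The main obstacle I anticipate is bookkeeping rather than conceptual: one must verify that the per-gate contributions to $c_t$ and $d_{tt'}$ accumulate consistently, noting that the $c_t$ matter only modulo $4$ while the $d_{tt'}$ matter only modulo $2$ (since a factor of $2$ already multiplies them), and that $g$ carries no affine constant, which holds because $\CNOT$ gates generate only linear, not affine, maps.
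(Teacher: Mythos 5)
Your proposal is correct, but it reaches the conclusion by a genuinely different route than the paper. The paper stays entirely within the representation of the phase polynomial as a $\Z_4$-weighted sum of Boolean linear forms: it uses the template identity $i^{a\oplus b\oplus c}=i^{3a+3b+3c+3(a\oplus b)+3(a\oplus c)+3(b\oplus c)}$ to recursively split any form supported on $s$ literals into six forms supported on at most $s-1$ literals, iterating for $s=n,n-1,\dots,3$ until only singletons and pairs $x_j\oplus x_k$ remain; those are then synthesized with $\P$ and $\CZ$ gates. You instead change basis to multilinear monomials via $\bigoplus_{t\in T}x_t=\sum_{\emptyset\neq S\subseteq T}(-2)^{|S|-1}\prod_{t\in S}x_t$ and observe that every monomial of degree at least three carries a coefficient divisible by $4$, hence dies in the exponent of $i$; what survives is a linear part (a -P- layer) and a quadratic part with even coefficients (a -CZ- layer, since $i^{2x_tx_{t'}}=(-1)^{x_tx_{t'}}$), with the -C- layer realizing $g$. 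Your argument is more direct and avoids the $n-2$-step recursion, and it makes transparent \emph{why} two stages of diagonal gates suffice (only degrees $\le 2$ survive mod $4$); the paper's argument has the advantage of never leaving the sum-of-linear-forms normal form of Theorem~\ref{thm:0}, so each intermediate object is immediately a $\{\P,\CNOT\}$-circuit datum. Your closing observations --- that the $c_t$ matter mod $4$ while the $d_{tt'}$ matter only mod $2$, that the diagonal layers commute and precede $g$, and that $g$ has no affine part --- are exactly the bookkeeping needed, and they all check out.
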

\begin{proof}
A $($-P-C-$)^m$ circuit has no more than $k\leq 3nm$ Phase gates.  Name those gates $\P_{j=1..k}$, denote Boolean linear functions they apply phases to as $f_{j=1..k}(x_1,x_2,...,x_n)$, and name the reversible linear function computed by \linebreak $($-P-C-$)^m$ (Theorem \ref{thm:0}) as $g(x_1,x_2,...,x_n)$. Phase polynomial computed by the original circuit is $f_1(x_1,x_2,...,x_n)+f_2(x_1,x_2,...,x_n)+...+f_k(x_1,x_2,...,x_n)$.  We will next transform phase polynomial to an equivalent one, that will be easier to write as a compact circuit.  To accomplish this, observe that $i^{a+b+c+(a\oplus b) + (a\oplus c) + (b\oplus c) + (a\oplus b \oplus c)} = i^4= 1$, where $a$, $b$, and $c$ are arbitrary Boolean linear functions of the primary variables.  This equality can be verified by inspection through trying all 8 possible combinations for Boolean values $a$, $b$, and $c$. The equality can be rewritten as 
\begin{equation}\label{phasetmpl}
i^{a\oplus b \oplus c} = i^{3a+3b+3c+3(a\oplus b)+3(a\oplus c)+3(b\oplus c)},
\end{equation} 
suggesting how it will be used.  The following algorithm takes $n{-}2$ steps.

Step $n$. Consolidate terms in the phase polynomial $f_1(x_1,x_2,...,x_n)+f_2(x_1,x_2,...,x_n)+...+f_k(x_1,x_2,...,x_n)$ by replacing $uf_j(x_1,x_2,...,x_n)+vf_k(x_1,x_2,...,x_n)$ with $(u+v \bmod 4)f_j(x_1,x_2,...,x_n)$ whenever $f_j=f_k$.  Once done, look for $f_j=x_1 \oplus x_2 \oplus ... \oplus x_n$, being the maximal length linear function of the primary inputs.  If no such function found, move to the next step.  If it is found with a non-zero coefficient $u$, as an additive term $u(x_1 \oplus x_2 \oplus ... \oplus x_n)$, replace it by the equivalent 6-term mixed arithmetic polynomial $(4-u)x_1 + (4-u)x_2 + (4-u)(x_3 \oplus x_4 \oplus ... \oplus x_n) + (4-u)(x_1 \oplus x_2) + (4-u)(x_1 \oplus x_3 \oplus x_4 \oplus ... \oplus x_n) + (4-u)(x_2 \oplus x_3 \oplus ... \oplus x_n)$. This transformation is derived from eq.~(\ref{phasetmpl}) by assigning $a=x_1$, $b=x_2$, and $c=x_3 \oplus x_4 \oplus... \oplus x_n$.  Consolidate all equal terms.  The transformed phase polynomial is equivalent to the original one in the sense of the overall combination of phases it prescribes to compute, however, it is expressed over linear terms with at most $n{-}1$ variables.

Step $s$, $s=(n-1)..3$. From the previous step we have phase polynomial of the form $u^\prime_1f^\prime_1(x_1,x_2,...,x_n)+u^\prime_2f^\prime_2(x_1,x_2,...,$ $x_n)+...+u^\prime_{k^\prime}f^\prime_{k^\prime}(x_1,x_2,...,x_n).$ By construction it is guaranteed that the functions $f^\prime_{j=1..k^\prime}$ EXOR no more than $s$ literals.  For each $f^\prime_j=x_{j_1} \oplus x_{j_2} \oplus ... \oplus x_{j_s}$, with the coefficient $u^{\prime}_j \not\equiv 0 \bmod 4$ replace this term with the sum of six terms, each having no more than $s{-}1$ literals by using eq.~(\ref{phasetmpl}) and setting $a$, $b$, and $c$ to carry linear functions over the non-overlapping non-empty subsets of $\{x_{j_1}, x_{j_2}, ..., x_{j_s}\}$ whose union gives the entire set $\{x_{j_1}, x_{j_2}, ..., x_{j_s}\}$. Value $s{=}3$ marks the last opportunity to break down a term in the phase polynomial expression into a set of terms over smaller numbers of variables.  Upon completion of this step, the linear functions participating in the phase polynomial expression contain at most two literals each.

The transformed phase polynomial description of the original circuit now has the following form: 
phase polynomial $\sum\limits_{j=1}^{n}u_jx_j + \sum\limits_{j=1}^{n}\sum\limits_{k=j+1}^{n}u_{j,k}(x_j \oplus x_k)$, where $u_{\cdot}, u_{\cdot,\cdot} \in \mathbb{Z}_4$, and the linear reversible function $g(x_1,x_2,...,x_n)$. We next show how to implement such a unitary as a -P-CZ-C- circuit, focusing separately on the phase polynomial and the linear reversible part. We synthesize individual terms in the phase polynomial as follows. 
\begin{itemize}
\item For $j=1..n,$ the term $u_jx_j$ is obtained as the single-qubit gate circuit $\P^{u_j}(x_j)$;
\item For $j=1..n$, $k=j{+}1 ..n$, the term $u_{j,k}(x_j \oplus x_k)$ is obtained as follows: 
\begin{itemize}
\item if $u_{j,k}\equiv 2 \bmod 4$, by the circuit $\P^2(x_j)\P^2(x_k) = \Z(x_j)\Z(x_k)$;
\item if $u_{j,k}\equiv 1 \text{ or } 3 \bmod 4$, by the circuit  $\P^{u_{j,k}}(x_j)\P^{u_{j,k}}(x_k) \CZ(x_j,x_k)$.
\end{itemize} 
\end{itemize}
The resulting circuit contains $\P$ and $\CZ$ gates; it implements phase polynomial $\sum\limits_{j=1}^{n}u_jx_j + \sum\limits_{j=1}^{n}\sum\limits_{k=j+1}^{n}u_{j,k}(x_j \oplus x_k)$ and the identity linear reversible function. Since all $\P$ and $\CZ$ gates commute, Phase gates can be collected on the left side of the circuit. This results in the ability to express phase polynomial construction as a -P-CZ- circuit.  We conclude the entire construction via obtaining the linear reversible function $g(x_1,x_2,...,x_n)$ as a -C- stage, with the overall computation described as a -P-CZ-C- circuit.
\end{proof}

Note that -P-CZ-C- can also be written as -C-P-CZ-, if one first synthesizes the linear reversible function $g(x_1,x_2,...,x_n)=(g_1(x_1,x_2,...,x_n),g_2(x_1,x_2,...,x_n),...,g_n(x_1,x_2,...,x_n))$, and then expresses the phase polynomial in terms of the degree-2 terms over the set $\{g_1,g_2,...,g_n\}$. Other ways to write such a computation include -CZ-P-C- and -C-CZ-P-, that are obtained from the first two by commuting -P- and -CZ- stages.

\begin{corollary}
-H-C-P-C-P-C-H-P-C-P-C- \cite{ar:ag} = -H-C-CZ-P-H-P-CZ-C-.
\end{corollary}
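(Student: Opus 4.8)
The plan is to cut the 11-stage Aaronson--Gottesman circuit at its middle (second) Hadamard stage, fold the two resulting blocks of $\{\P,\CNOT\}$ gates separately by Theorem~\ref{thm:1}, and then reassemble. Writing the circuit as -H-$[$C-P-C-P-C$]$-H-$[$P-C-P-C$]$-, the two Hadamard stages are left untouched and only the bracketed blocks are folded. The right block is immediate: -P-C-P-C- is exactly $(\text{-P-C-})^2$, so Theorem~\ref{thm:1} with $m=2$ gives -P-C-P-C- = -P-CZ-C-, which is the right half -P-CZ-C- of the target, sitting after the second H.

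The left block -C-P-C-P-C- is the part requiring care, and I expect it to be the main obstacle, because it opens and closes with a -C- stage and so is not literally of the form $(\text{-P-C-})^m$. I would handle it by observing that the folding argument of Theorem~\ref{thm:1} uses only the phase-polynomial description of Theorem~\ref{thm:0}, and therefore applies verbatim to \emph{any} circuit over $\{\P,\CNOT\}$, in particular to -C-P-C-P-C-: one collects all Phase gates together with their linear functions and the overall linear reversible function $g$, reduces the phase polynomial to the single/pair form $\sum_j u_j x_j + \sum_{j<k} u_{j,k}(x_j\oplus x_k)$ via eq.~(\ref{phasetmpl}), and synthesizes. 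Selecting the -C-CZ-P- ordering among the equivalent forms -P-CZ-C-, -C-P-CZ-, -CZ-P-C-, -C-CZ-P- noted after Theorem~\ref{thm:1} then yields -C-P-C-P-C- = -C-CZ-P-, which is the left half of the target after the first H.

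The subtle point, and the reason the leading -C- is not automatic, is that a naive fold of the inner $(\text{-P-C-})^2$ inside -C-P-C-P-C- leaves a stray leading -C- and produces -C-P-CZ-C- rather than a 3-stage form. Pushing this extra -C- stage through re-expresses the single/pair phase polynomial over a linearly transformed variable set, which generically creates XOR terms on more than two literals. This is exactly where the reduction of Theorem~\ref{thm:1} must be invoked a second time: eq.~(\ref{phasetmpl}) collapses those long XOR terms back to single/pair form, absorbing the stray -C- and collapsing -C-P-CZ-C- to -C-CZ-P-. Equivalently, and more cleanly, one applies the reduction once to the entire block at the level of its combined phase polynomial and linear reversible function, bypassing the intermediate four-stage form altogether.

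Finally I would substitute the two folded blocks back into the split expression. Since the Hadamard stages are carried along unchanged, this gives -H-C-P-C-P-C-H-P-C-P-C- = -H-$[$C-CZ-P$]$-H-$[$P-CZ-C$]$- = -H-C-CZ-P-H-P-CZ-C-, which is the claimed identity. A quick stage count confirms consistency: each five- or four-stage $\{\P,\CNOT\}$ block collapses to three stages, so the $6+1+4$ split becomes $4+1+3$, i.e., eight stages in total.
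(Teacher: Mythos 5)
Your proposal is correct and matches the paper's intended argument: the corollary is stated without proof precisely because it follows from Theorem~\ref{thm:1} applied to the two $\{\P,\CNOT\}$ blocks flanking the middle -H-, together with the remark after that theorem that -P-CZ-C- may be reordered as -C-CZ-P-. The only simplification worth noting is that your ``subtle point'' about the stray leading -C- is unnecessary, since -C-P-C-P-C- is literally $(\text{-P-C-})^3$ with an empty initial -P- layer (or, as you also observe, any $\{\P,\CNOT\}$ circuit admits the phase-polynomial description of Theorem~\ref{thm:0}), so Theorem~\ref{thm:1} applies to it directly.
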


\section{-C- vs -CZ-}\label{sec:cvscz}

We have previously noted that $\CNOT$ and $\CZ$ gates have a comparable cost as far as their implementation within some QIP proposals is concerned.  In this section, we study  $\{\P, \CZ, \CNOT\}$ implementations of stages -C- and {-CZ-}.  The goal is to provide further evidence in support of the statement that -CZ- can be thought of as a simpler stage compared to the -C- stage, and going beyond counting the degrees of freedom argument.

\begin{lemma}
Optimal quantum circuit over $\{\CZ\}$ library for a -CZ- stage has at most $\frac{n(n-1)}{2}$ $\CZ$ gates.
\end{lemma}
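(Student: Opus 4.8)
The plan is to exploit the two defining algebraic properties of the $\CZ$ gate already recorded in the definition of the -CZ- layer: any two $\CZ$ gates commute, and $\CZ^2=Id$. Together these force every -CZ- circuit into a canonical form indexed by a subset of qubit pairs, from which both the stated upper bound and its optimality follow by a counting step.

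First I would establish the upper bound. Given an arbitrary circuit over $\{\CZ\}$, I use commutativity to gather all gates acting on a common unordered pair $\{a,b\}$ of qubits into a single block, and then use $\CZ^2=Id$ to reduce each block to either the identity (when the number of gates on that pair is even) or a single $\CZ(a,b)$ (when it is odd). Since there are exactly $\binom{n}{2}=\frac{n(n-1)}{2}$ unordered pairs on $n$ qubits, the surviving gates form a product $\prod_{\{a,b\}\in S}\CZ(a,b)$ over some subset $S$ of pairs, hence consist of at most $\frac{n(n-1)}{2}$ gates. This already shows that every -CZ- stage is implementable within the claimed bound, which is all the literal statement requires.

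Next I would establish that this count is in fact optimal for the given stage, matching the ``worst case optimal'' remark in the surrounding text. Running the reduction above on any competing circuit produces a product over some subset $S'$, so it suffices to show that the map $S\mapsto U_S:=\prod_{\{a,b\}\in S}\CZ(a,b)$ is injective and that the minimal gate count realizing $U_S$ is exactly $|S|$. Injectivity follows because $U_S$ acts on a computational basis state as $U_S\ket{x_1\dots x_n}=(-1)^{\sum_{\{a,b\}\in S} x_a x_b}\ket{x_1\dots x_n}$, and the degree-$2$ monomials $x_a x_b$ over distinct pairs are linearly independent over $\F_2$, so distinct subsets yield distinct phase functions and distinct unitaries. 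Because any $\{\CZ\}$-circuit for $U_S$ reduces to $S$, and repeated use of a pair only cancels rather than shrinking the required set, the optimal number of $\CZ$ gates for $U_S$ is precisely $|S|\le\frac{n(n-1)}{2}$, with equality attained when $S$ is the set of all pairs.

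The argument is a structural normalization followed by counting, so there is no serious obstacle. The only point requiring care is the optimality direction, where one must rule out that overlapping or repeated $\CZ$ gates could express a given stage more cheaply; this is exactly what the commute-and-cancel reduction, together with the injectivity of $S\mapsto U_S$, precludes.
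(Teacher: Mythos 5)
Your proof is correct and follows essentially the same route as the paper, which disposes of this lemma with the one-line observation that all $\CZ$ gates commute and square to the identity, so a $\{\CZ\}$ circuit is determined by the presence or absence of a $\CZ$ on each of the $\binom{n}{2}$ qubit pairs. Your added injectivity argument via the phase function $(-1)^{\sum_{\{a,b\}\in S}x_ax_b}$ is a sound (and slightly more careful) justification of the optimality claim that the paper states without proof.
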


\begin{figure*}[t]
\begin{equation*}
\Qcircuit @C=1.23em @R=1.23em @! {
& \qw 		& \qw 		& \qw 		& \ctrl{1}	& \ctrl{2}  & \ctrl{3}	& \ctrl{4}	& \qw \\
& \ctrl{1}  & \ctrl{2}	& \ctrl{3}	& \ctrl{-1}	& \qw 		& \qw 		& \qw 		& \qw \\
& \ctrl{-1} & \qw 		& \qw 		& \qw 		& \ctrl{-2} & \qw 		& \qw 		& \qw \\
& \qw  		& \ctrl{-2}	& \qw		& \qw 		& \qw  		& \ctrl{-3}	& \qw		& \qw \\
& \qw 		& \qw 		& \ctrl{-3}	& \qw 		& \qw 		& \qw 		& \ctrl{-4}	& \qw 
} 
\raisebox{-2.95em}{\hspace{1mm}=\hspace{1mm}}
\Qcircuit @C=0.65em @R=0.65em @! {
& \ctrl{1}	& \ctrl{1} 	& \qw 		& \qw 		& \qw 		& \ctrl{1} 	& \qw \\
& \ctrl{-1}	& \targ		& \ctrl{1}  & \ctrl{2}	& \ctrl{3}	& \targ		& \qw \\
& \qw 		& \qw		& \ctrl{-1} & \qw 		& \qw 		& \qw		& \qw \\
& \qw 		& \qw		& \qw  		& \ctrl{-2}	& \qw		& \qw		& \qw \\
& \qw 		& \qw		& \qw 		& \qw 		& \ctrl{-3}	& \qw		& \qw 
} 
\raisebox{-2.95em}{\hspace{1mm}=\hspace{1mm}}
\Qcircuit @C=-0.15em @R=-0.15em @! {
& \gate{\P}	& \ctrl{1} 	& \qw 				& \qw 		& \qw 		& \qw 		& \ctrl{1} 	& \qw \\
& \gate{\P}	& \targ		& \gate{\P^\dagger}	& \ctrl{1}  & \ctrl{2}	& \ctrl{3}	& \targ		& \qw \\
& \qw 		& \qw		& \qw 				& \ctrl{-1} & \qw 		& \qw 		& \qw		& \qw \\
& \qw 		& \qw		& \qw 				& \qw  		& \ctrl{-2}	& \qw		& \qw		& \qw \\
& \qw 		& \qw		& \qw 				& \qw 		& \qw 		& \ctrl{-3}	& \qw		& \qw 
} 
\end{equation*}
\caption{\label{fig:exampleIdentities} Circuit identities illustrating rewriting of the -CZ- circuits.}
\end{figure*}

Indeed, all $\CZ$ gates commute, which limits the expressive power of the circuits over $\CZ$ gates.  However, once we add the non-commuting $\CNOT$ gate, and after that the Phase gate, the situation changes.  We can now implement -CZ- circuits more efficiently, such as illustrated by the circuit identities shown in Fig.~\ref{fig:exampleIdentities}.  The unitary implemented by the circuitry shown in Fig.~\ref{fig:exampleIdentities} requires $7$ $\CZ$ gates as a $\{\CZ\}$ circuit, $6$ gates as a $\{\CZ,\CNOT\}$ circuit, and only $5$ two-qubit gates as a $\{\P,\CZ,\CNOT\}$ circuit.  This illustrates that the $\CNOT$ and $\P$ gates are important in constructing efficient -CZ- circuits.

We may consider adding the $\P$ and $\CZ$ gates to the $\{\CNOT\}$ library in hopes of constructing more efficient circuits implementing the -C- stage.  However, as the following lemma shows, this does not help.
\begin{lemma}
Any $\{\P, \CZ, \CNOT\}$ circuit implementing an element of the layer -C- using a non-zero number of $\P$ and $\CZ$ gates is suboptimal.
\end{lemma}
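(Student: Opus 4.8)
The plan is to argue by deletion: I claim that any $\{\P,\CZ,\CNOT\}$ circuit $C$ that realizes a -C- stage can be turned into a strictly cheaper circuit computing the very same unitary by simply erasing all of its $\P$ and $\CZ$ gates. The key structural fact I would establish first is that $\P$ and $\CZ$ are diagonal in the computational basis, while $\CNOT$ permutes basis states. Tracking the action of $C$ on a basis vector $\ket{x}$ gate by gate, every diagonal gate leaves the current basis label unchanged and only contributes a phase, whereas each $\CNOT$ applies its linear map to the label. Hence the linear reversible function $g$ realized by $C$ (in the sense of Theorem~\ref{thm:0}) is the ordered composition of the $\CNOT$ maps alone, and is entirely independent of whether, or where, the $\P$ and $\CZ$ gates appear.

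Next I would invoke the defining property of the -C- stage, namely that it applies no phase: its phase polynomial is identically $0$ as a map into $\mathbb{Z}_4$. Since $C$ implements this stage, we have $C\ket{x}=\ket{g(x)}$ for all $x$. Let $C'$ be the circuit obtained from $C$ by deleting every $\P$ and $\CZ$ gate. By the previous paragraph $C'$ computes the same $g$, and being a pure $\CNOT$ circuit it produces no phase, so $C'\ket{x}=\ket{g(x)}=C\ket{x}$; the two circuits are equal as unitaries. Finally I would compare gate counts: $C'$ retains exactly the $\CNOT$ gates of $C$ and none of its $\P$ or $\CZ$ gates, so if $C$ used a nonzero number of $\P$ and $\CZ$ gates then $C'$ is strictly shorter --- strictly fewer two-qubit gates once any $\CZ$ is present, and strictly fewer gates in total as soon as any $\P$ or $\CZ$ is present. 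Thus $C$ is not optimal.

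The step I expect to require the most care is the structural claim that $g$ depends only on the $\CNOT$ gates: I must rule out the possibility that interspersed $\P$ and $\CZ$ gates conspire with the $\CNOT$ gates to let fewer $\CNOT$s realize the target linear function. This is exactly what the diagonality argument forecloses --- because $\P$ and $\CZ$ never move basis labels, the $\CNOT$s of $C$ already compute $g$ by themselves --- but it is the point on which the entire deletion argument hinges, and it is worth stating explicitly before drawing the cost comparison.
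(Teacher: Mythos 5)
Your proposal is correct and takes essentially the same route as the paper: delete every $\P$ and $\CZ$ gate, observe that these gates are diagonal and hence affect only the phase-polynomial part of the description from Theorem~\ref{thm:0} while the linear reversible function $g$ is determined by the $\CNOT$ gates alone, and conclude that the stripped circuit realizes the same -C- element with strictly fewer gates. The paper phrases the deletion step directly in the phase-polynomial formalism (recording $\P$ as $1\cdot x$ and $\CZ$ as $y+z+3(y\oplus z)$ over the identity linear map), which is just a formalized version of your diagonality argument.
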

\begin{proof}
Each $\P$ gate applied to a qubit $x$ can be expressed as a phase polynomial $1\cdot x$ over the identity reversible linear function.  Each $\CZ$ gate applied to a set of qubits $y$ and $z$ can be expressed as a phase polynomial $y + z + 3(y \oplus z)$ and the identity reversible function.  Removing all $\P$ and $\CZ$ gates from the given circuit thus modifies only the phase polynomial part of its phase polynomial description.  Removing all $\P$ and $\CZ$ gates from the $\{\P, \CZ, \CNOT\}$ circuit guarantees that the phase polynomial of the resulting circuit equals to the identity, such as required in the -C- stage.  This results in the construction of a shorter circuit in cases when the original $\P$ and $\CZ$ gate count was non-zero.
\end{proof}

We next show in Table \ref{tab:main} optimal counts and upper bounds on the number of gates it takes to synthesize the most difficult function from stages -C- and -CZ- for some small $n$. Observe how the two-qubit gate counts for the -CZ- stage, when constructed as a circuit over  $\{\P, \CZ, \CNOT\}$ library, remain lower than those for the -C- stage. 

\begin{table}[ht]
\centering 
\begin{tabular}{c|cc|cc} 
  & \multicolumn{2}{c|}{-CZ-} 		& \multicolumn{2}{c}{-C-} \\
n & $\{\CZ\}$ 	& $\{\P, \CZ,\CNOT\}$ 	& $\{\CNOT\}$ 	& $\{\P, \CZ,\CNOT\}$ \\ \hline
2 & 1			& 1					& 3				& 3 \\
3 & 3			& 3					& 6 			& 6  \\
4 & 6			& 5					& 9 			& 9  \\
5 & 10			& 7					& 12			& 12  \\
\end{tabular} 
\caption{Gate counts required to implement arbitrary -CZ- and -C- stages for some small $n$: optimal -CZ- stage gate counts as circuits over $\{\CZ\}$, upper bounds on the two-qubit gate count for -CZ- over $\{\P, \CZ,\CNOT\}$, achieved based on the application of identities from Fig.~(\ref{fig:exampleIdentities}) applied to circuits with $\CZ$ gates, and optimal $\{\CNOT\}$ and $\{\P, \CZ,\CNOT\}$ two-qubit gate counts for stage -C-.} \label{tab:main}
\end{table}

In \cite{ar:pmh} an asymptotically optimal algorithm  for $\{\CNOT\}$ synthesis of arbitrary -C- stage functions was reported, that leads to the worst case gate complexity of $O\left(\frac{n^2}{\log n}\right)$.  It is possible that an asymptotically optimal algorithm for $\{\P, \CZ, \CNOT\}$ circuits implementing arbitrary -CZ- stage functions can be developed, at which point its complexity has to be $O\left(\frac{n^2}{\log n}\right)$.  To determine which of the two results in shorter circuits, one has to develop constants in front of the leading complexity terms. 

\bigskip
We point out that gate count is only one of several possible metrics of efficiency. For instance, two-qubit gate depth over Linear Nearest Neighbour (LNN) architecture is also an important metric of efficiency.  This metric has been applied in \cite{ar:kms} to show an asymptotically optimal upper bound of $5n$ $\CNOT$ layers required to obtain an arbitrary -C- stage.  

Define $\widehat{\text{-CZ-}}$ to be -CZ- accompanied by the complete qubit reversal ({\em i.e.}, the linear reversible mapping $\ket{x_1x_2...x_n} \mapsto \ket{x_nx_{n-1}...x_1}$).  We next show that $\widehat{\text{-CZ-}}$ can be executed as a two-qubit gate depth-$(2n+2)$ computation over LNN.  This result will be used to reduce depth in the implementation of arbitrary stabilizer circuits.

\begin{theorem}\label{thm:2}
$\widehat{\text{-CZ-}}$ can be implemented as a $\CNOT$ depth-$(2n+2)$ circuit.
\end{theorem}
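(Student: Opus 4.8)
The plan is to realize the qubit reversal by a brick-wall of nearest-neighbor $\SWAP$ gates, to fold each required $\CZ$ into the unique moment its two qubits become adjacent, and then to optimize the $\CNOT$ depth of the resulting network. First I would reverse the line using the odd-even transposition network: $n$ rounds, where each round is a layer of $\SWAP$s acting on disjoint adjacent pairs. The combinatorial fact that drives the whole argument is that, during such a reversal, every pair of qubits $\{i,j\}$ occupies adjacent wires at \emph{exactly one} round. Since a -CZ- stage is a commuting collection of $\CZ$ gates indexed by pairs, each $\CZ(i,j)$ can be scheduled by attaching it to the single $\SWAP$ that crosses the pair $\{i,j\}$; this is where the hat (the reversal) is exploited to linearize the quadratically many $\CZ$s into a linear-depth pattern.

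Next I would account for the cost of the combined gadget at each crossing, which is either a bare $\SWAP$ or the product $\SWAP\cdot\CZ$. Because $\CZ=\H\,\CNOT\,\H$ and $\SWAP$ is three $\CNOT$s, while $\SWAP\cdot\CZ$ (the fermionic swap) is again a two-qubit Clifford realizable with three $\CNOT$s and free single-qubit gates, attaching a $\CZ$ introduces no additional two-qubit gates. Consequently the entire $\widehat{\text{-CZ-}}$ inherits the same $\CNOT$ skeleton as the bare reversal network, and it remains only to bound the $\CNOT$ depth of that network, treating $\H$ and $\P$ as depth-free.

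The depth bound is the crux. Counting naively, each round has depth $3$ and consecutive rounds share a qubit and cannot be parallelized, giving $3n$; the trailing $\CNOT$ of a round-$r$ swap on $(i,i{+}1)$ conflicts with both neighboring round-$(r{+}1)$ swaps, so no obvious boundary shuffle lowers the constant. To reach $2n+2$ I would instead restructure the reversal as two triangular $\CNOT$ sweeps (a down-sweep and an up-sweep built from two-$\CNOT$ bricks of the form control-then-reverse-control) rather than as independent three-$\CNOT$ swaps; the shifted brick wall then carries $2$ $\CNOT$ layers per round, with only the two end rounds paying the extra $+2$ for the leading and trailing $\CNOT$s that have nothing to absorb. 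The main obstacle is precisely this accounting: proving that the third $\CNOT$ of each swap can be absorbed into the restructured sweeps so that the per-round cost drops from $3$ to $2$ \emph{while simultaneously} preserving both the reversal permutation and every scheduled $\CZ$ phase. Verifying that the scheduled phases survive the telescoping of the boundary $\CNOT$s, and that the two-sweep pattern indeed composes to the anti-diagonal permutation, is the delicate step where the constant $2n+2$ (rather than $3n$) is won, and I would check it by tracking the action on a generic computational-basis state through the two sweeps.
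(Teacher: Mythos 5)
Your first half is sound but only delivers depth $3n$: the odd--even transposition network does make every pair adjacent exactly once, folding $\CZ(i,j)$ into the corresponding crossing is valid (indeed $\SWAP\cdot\CZ$ is locally equivalent to iSWAP and needs only two $\CNOT$s, though the bare $\SWAP$s still need three), and consecutive rounds share qubits, so nothing overlaps. The step you yourself flag as the crux --- restructuring into two-$\CNOT$ bricks to reach $2n+2$ --- is not carried out, and it is not merely an accounting problem: it invalidates the scheduling mechanism your proposal rests on. Once each crossing becomes the brick $\CNOT(a;b)\CNOT(b;a)$, which maps $(f,g)\mapsto(g,f\oplus g)$, the wires no longer carry single variables but nontrivial linear combinations; the pair $(x_i,x_j)$ never again sits on adjacent wires, so there is no ``unique moment'' at which to insert $\CZ(i,j)$. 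The constant $2n{+}2$ cannot be obtained while keeping the $\SWAP$ skeleton, and once you give it up your key combinatorial fact no longer applies.

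The paper's proof resolves exactly this. It uses the Kutin--Moulton--Smithline-style reversal built from $n{+}1$ alternating depth-$2$ $\CNOT$ stages (total depth $2n{+}2$), accepts that intermediate wires carry interval functions $x_j\oplus x_{j+1}\oplus\cdots\oplus x_k$, and changes basis to $y_j:=x_1\oplus\cdots\oplus x_j$ so that the -CZ- phase polynomial can be rewritten (via the reduction of Theorem~\ref{thm:1}) as $\sum_j u_j y_j+\sum_{j<k}u_{j,k}(y_j\oplus y_k)$, where each $y_j\oplus y_k$ is precisely such an interval. The replacement for your adjacency fact is then a different combinatorial claim, proved by a pattern-shifting and counting argument: the $n(n{+}1)/2$ intermediate wire values produced by the reversal network run through every interval $[j,k]$ exactly once. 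Each phase term is then realized by a single-qubit $\P^{u}$ gate inserted at the moment its interval appears, contributing nothing to two-qubit depth. Without establishing this exactly-once coverage of all intervals (and the accompanying re-expression of the $\CZ$ phases over the $y$-basis), the bound $2n+2$ does not follow from your argument.
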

\begin{proof}
Consider phase polynomial description of the circuit $\widehat{\text{-CZ-}}$. However, rather than describe both parts of the expression, phase polynomial itself and the linear reversible transformation, over the set of primary variables, we will describe phase polynomial over the variables $y_1,y_2,...,y_n$ defined as follows:
\begin{eqnarray*}
y_1:=x_1, \\
y_2:=x_1 \oplus x_2, \\
..., \\
y_n:= x_1 \oplus x_2 \oplus ... \oplus x_n.
\end{eqnarray*}
This constitutes the change of basis $\{x_1,x_2,...,x_n\} \mapsto \{y_1,y_2,...,y_n\}$.  Similarly to how it was done in the proof of Theorem \ref{thm:1}, we reduce phase polynomial representation of $\widehat{\text{-CZ-}}$ to the application of Phase gates to the EXORs of pairs and the individual variables from the set $\{y_1,y_2,...,y_n\}$, 
\begin{equation}\label{eq:alljk}
\sum\limits_{j=1}^{n}u_jy_j + \sum\limits_{j=1}^{n}\sum\limits_{k=j+1}^{n}u_{j,k}(y_j \oplus y_k),
\end{equation}
and the linear reversible function $g(x_1,x_2,...,x_n): \ket{x_1x_2...x_n} \mapsto \ket{x_nx_{n-1}...x_1}$.  Observe that $y_j \oplus y_k = x_j \oplus x_{j+1} \oplus ... \oplus x_k$, and thereby this linear function can be encoded by the integer segment $[j,k]$. The primary variable $x_j$ admits the encoding $[j,j]$.  We use this notation next.  In the following we implement the pair of the phase polynomial expression and the reversal of qubits (a linear reversible function) via a quantum circuit.

\begin{figure*}[t]
\centerline{
\Qcircuit @C=0.97em @R=0.97em @! {
&&&&& [6,\;\;] &&&&&& [\;\;, \;\;]	&&&&&& [\;\;,3] &&&&&&& \\
&&&&& [4,\;\;] &&&&&& [\;\;, \;\;]	&&&&&& [\;\;,3] &&&&&&& \\
&&&&& [4,\;\;] &&&&&& [6, 3]		&&&&&& [\;\;,5] &&&&&&& \\
&&&&& [2,\;\;] &&&&&& [4, 3]		&&&&&& [\;\;,5] &&&&&&& \\
\lstick{[1,1]}	& \ctrl{1}	& \qw 		& \targ		& \qw		& [2,3] &
				& \ctrl{1}	& \qw 		& \targ		& \qw		& [4,5] &
				& \ctrl{1}	& \qw 		& \targ		& \qw		& [6,7] & 
				& \ctrl{1}	& \qw 		& \targ		& \qw		& \qw & {[7,7]} \\
\lstick{[2,2]}	& \targ 	& \targ		& \ctrl{-1}	& \ctrl{1}	& [1,3] &
				& \targ 	& \targ		& \ctrl{-1}	& \ctrl{1}	& [2,5] &
				& \targ 	& \targ		& \ctrl{-1}	& \ctrl{1}	& [4,7] &
				& \targ 	& \targ		& \ctrl{-1}	& \ctrl{1}	& \qw & {[6,6]} \\
\lstick{[3,3]}	& \ctrl{1}	& \ctrl{-1}	& \targ		& \targ		& [1,5] &
				& \ctrl{1}	& \ctrl{-1}	& \targ		& \targ		& [2,7] &
				& \ctrl{1}	& \ctrl{-1}	& \targ		& \targ		& [4,6] & 
				& \ctrl{1}	& \ctrl{-1}	& \targ		& \targ		& \qw & {[5,5]} \\
\lstick{[4,4]}	& \targ 	& \targ		& \ctrl{-1}	& \ctrl{1}	& [3,5] &
				& \targ 	& \targ		& \ctrl{-1}	& \ctrl{1}	& [1,7] &
				& \targ 	& \targ		& \ctrl{-1}	& \ctrl{1}	& [2,6] & 
				& \targ 	& \targ		& \ctrl{-1}	& \ctrl{1}	& \qw & {[4,4]} \\
\lstick{[5,5]}	& \ctrl{1}	& \ctrl{-1}	& \targ		& \targ		& [3,7] &
				& \ctrl{1}	& \ctrl{-1}	& \targ		& \targ		& [1,6] &
				& \ctrl{1}	& \ctrl{-1}	& \targ		& \targ		& [2,4] &
				& \ctrl{1}	& \ctrl{-1}	& \targ		& \targ		& \qw & {[3,3]} \\
\lstick{[6,6]}	& \targ 	& \targ		& \ctrl{-1}	& \ctrl{1}	& [5,7] &
				& \targ 	& \targ		& \ctrl{-1}	& \ctrl{1}	& [3,6] &
				& \targ 	& \targ		& \ctrl{-1}	& \ctrl{1}	& [1,4] &
				& \targ 	& \targ		& \ctrl{-1}	& \ctrl{1}	& \qw & {[2,2]} \\
\lstick{[7,7]}	& \qw		& \ctrl{-1}	& \qw		& \targ		& [5,6] &
				& \qw		& \ctrl{-1}	& \qw		& \targ		& [3,4] &
				& \qw		& \ctrl{-1}	& \qw		& \targ		& [1,2] & 
				& \qw		& \ctrl{-1}	& \qw		& \targ		& \qw & {[1,1]} \\
&&&&& [\;\;,6] &&&&&& [5, 4]		&&&&&& [3,\;\;] &&&&&&& \\
&&&&& [\;\;,4] &&&&&& [5, 2]		&&&&&& [3,\;\;] &&&&&&& \\
&&&&& [\;\;,4] &&&&&& [\;\;, \;\;]	&&&&&& [5,\;\;] &&&&&&& \\
&&&&& [\;\;,2] &&&&&& [\;\;, \;\;]	&&&&&& [5,\;\;] &&&&&&& \\
&&&&& \downarrow \;\; \uparrow &&&&&& \downarrow \;\; \uparrow	&&&&&& \downarrow \;\; \uparrow &&&&&&&
}
}
\caption{Constructing $\widehat{\text{-CZ-}}$ for $n=7$. The circuit uses $m+1=\frac{n-1}{2}+1=4$ depth-$4$ stages $S$. Patterns $Pj$ and $Pk$ are $(6,4,4,2,2,1,1,3,3,5,5)$ and $(3,3,5,5,7,7,6,6,4,4,2)$, correspondingly.  Arrows $\downarrow$ and $\uparrow$ show the direction of the 2-position shifts of the respective patterns. $[j,k]$ denotes linear function $x_j \oplus x_{j+1} \oplus ... \oplus x_k$ of primary variables, that accepts the application of Phase gates to, so long as contained to within the circuit.  A total of $4$ Phase gate stages is required; Phase gates can be applied to the individual literals selectively in the beginning or at the end of the circuit.}
\label{fig:ccz}
\end{figure*}

Observe that the swapping operation $g(x_1,x_2,...,x_n): \ket{x_1x_2...x_n} \mapsto \ket{x_nx_{n-1}...x_1}$ can be implemented as a circuit similar to the one from Theorem 5.1 \cite{ar:kms} in depth $2n{+}2$.  The rest of the proof concerns the ability to insert Phase gates in the circuit accomplishing the reversal of qubits such as to allow the implementation of each term in the phase polynomial, eq.~(\ref{eq:alljk}).

Since our qubit reversal circuit is slightly different from the one used in \cite{ar:kms}, and we explore its structure more extensively, we describe it next.  It consists of $n{+}1$ alternating stages, $S_1$ and $S_2$, where 

\begin{align*}
S_1:= & \; \CNOT(x_1;x_2)\CNOT(x_3;x_4)...\CNOT(x_{n-2};x_{n-1})  \\ 
     & \cdot \; \CNOT(x_3;x_2)\CNOT(x_5;x_4)...\CNOT(x_n;x_{n-1}) 
\end{align*}
for odd $n$, and 
\begin{align*}
S_1:= & \; \CNOT(x_1;x_2)\CNOT(x_3;x_4)...\CNOT(x_{n-1};x_n)  \\ 
     & \cdot \; \CNOT(x_3;x_2)\CNOT(x_5;x_4)...\CNOT(x_{n-1};x_{n-2})
\end{align*}
for even $n$, is a depth-$2$ circuit composed with the $\CNOT$ gates. Similarly, \begin{align*}
S_2:= & \; \CNOT(x_2;x_1)\CNOT(x_4;x_3)...\CNOT(x_{n-1};x_{n-2}) \\    
     & \cdot \; \CNOT(x_2;x_3)\CNOT(x_4;x_5)...\CNOT(x_{n-1};x_{n}) 
\end{align*} 
for odd $n$, and
\begin{align*}
S_2:= & \; \CNOT(x_2;x_1)\CNOT(x_4;x_3)...\CNOT(x_{n};x_{n-1}) \\
     & \cdot \; \CNOT(x_2;x_3)\CNOT(x_4;x_5)...\CNOT(x_{n-2};x_{n-1}) 
\end{align*}
for even $n$, is also a depth-$2$ circuit composed with the $\CNOT$ gates.  We refer to the concatenation of $S_1$ and $S_2$ as $S$.  The goal is to show that after $\lceil \frac{n}{2} \rceil$ applications of the circuit $S$ we are able to cycle through all $\frac{n(n+1)}{2}$ linear functions $[j,k]$, $j \leq k$. 

The remainder of the proof works slightly differently depending on the parity of $n$.  First, choose odd $n=2m{+}1$.  Consider two patterns of length $2n{-}3$,
\[
Pj:= (n{-}1,n{-}3,n{-}3,...,4,4,2,2,1,1,3,3,...,n{-}2,n{-}2)
\]
and 
\[ 
Pk:= (3,3,5,5,...,n,n,n{-}1,n{-}1,n{-}3,n{-}3,...6,6,4,4,2).
\]
Observe by inspection that the $i^{\text{th}}$ linear function computed by the single application of the stage $S$ is given by the formula $[Pj(n-3+i),Pk(i)]$, where $Pj(l)$ and $Pk(l)$ return $l^{\text{th}}$ component of the respective pattern.  It may further be observed, via direct multiplication by the linear reversible matrix corresponding to the transformation $S$, that the $i^{\text{th}}$ component upon $t$ ($t\leq m$) applications of the circuit $S$ is computable by the following formula, $[Pj(n-1-2t+i),Pk(2t-2+i)]=[Pj(n-3-2(t-1)+i),Pk(2(t-1)+i)]$. A simple visual explanation can be given: at each application of $S$ pattern $Pj$ is shifted by two positions to the left (down, Fig.~\ref{fig:ccz}), whereas pattern $Pk$ gets shifted by two positions to the right (up, Fig.~\ref{fig:ccz}).

Observe that every $[j,k]$, $j=1..n, k=1..n, j\leq k$ is being generated. Indeed, a given $[j,k]$ may only be generated at most once by the $0$ to $m$ applications of the circuit $S$. This is because once a given $j$ meets a given $k$ for the first time, at each following step, the respective value $k$ gets shifted away from $j$ to never meet again.  We next employ the counting argument to show that all functions $[j,k]$ are generated.  Indeed, the total number of functions generated by $0$ to $m$ applications of the stage $S$ is $(m+1)n=\left(\frac{n-1}{2}+1\right)n=\frac{n(n+1)}{2}$, each linear function generated is of the type $[j,k]$ ($j=1..n, k=1..n, j\leq k$), none of which can be generated more than once, and their total number is $\frac{n(n+1)}{2}$.  This means that every $[j,k]$ is generated. 

We illustrate the construction of the circuit implementing $\widehat{\text{-CZ-}}$ for $n=7$ in Fig.~\ref{fig:ccz}.

For even $n=2m$ the construction works similarly. The patterns $Pj$ and $Pk$ are $(n,n-2,n-2,n-4,n-4,...,2,2,1,1,3,3...,n-3,n-3,n-1)$ and $(3,3,5,5,...,n-1,n-1,n,n,n-2,n-2,...,4,4,2,2)$, respectively. The formula for computing the linear function $[j,k]$ for $i^{\text{th}}$ coordinate after $t$ applications of $S$ is $[Pj(n-2t+i),Pk(2t-2+i)]$. After $m$ applications of the circuit $S$ we generate linear functions $x_n, x_{n-1}, ..., x_4, x_2$ in addition to the $m$ new linear functions of the form $[j,k]$ ($j<k$). 
\end{proof}

To consider circuit depth makes most sense when applied to measure depth across most computationally intensive operations.  In both of the two leading approaches to quantum information processing, and limiting the attention to fully programmable universal quantum machines, superconducting circuits \cite{www:IBM} and trapped ions \cite{ar:deb}, the two-qubit gates take longer to execute and are associated with lower fidelity.  As such, they constitute the most expensive resource and motivate our choice to measure depth in terms of the two-qubit operations.  The selection of the LNN architecture to measure the depth over is motivated by the desire to restrict arbitrary interaction patterns to a reasonable set.  Both superconducting and trapped ions qubit-to-qubit connectivity patterns  \cite{ar:deb, www:IBM} are furthermore such that they allow embedding the linear chain in them.  

A further observation is that the two-qubit $\CNOT$ gate may not be native to a physical implementation, and therefore the $\CNOT$ implementation may likely use correcting single-qubit gates before and after using a specific two-qubit interaction.  This means that interleaving the two-qubit gates with the single-qubit gates such as done in the proof of Theorem \ref{thm:2} may not increase the depth, and restricting depth calculation to just the two-qubit stages is appropriate.  We did, however, report enough detail to develop depth figure over both single- and two-qubit gates for the implementations of stabilizer circuits relying on our construction. 

\begin{corollary}\label{col:2}
Arbitrary $n$-qubit stabilizer unitary can be executed in two-qubit gate depth $14n-4$ as an $\{\H, \P, \CNOT\}$ circuit over the LNN architecture.
\end{corollary}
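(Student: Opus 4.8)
The plan is to combine the algebraic layer decomposition of a general stabilizer circuit with an LNN two-qubit depth bound for each individual layer type. Starting from the Aaronson--Gottesman form and applying the Corollary to Theorem~\ref{thm:1}, I would work with the seven-stage decomposition $\text{-C-CZ-P-H-P-CZ-C-}$; the leading $\text{-H-}$ of the equivalent $\text{-H-C-CZ-P-H-P-CZ-C-}$ carries no two-qubit gates and is irrelevant to two-qubit depth. Since the $\text{-H-}$ and $\text{-P-}$ stages consist solely of single-qubit gates, and — as argued in the paragraph preceding this corollary — interleaving single-qubit gates with two-qubit stages does not increase two-qubit depth, the entire two-qubit cost is carried by the two $\text{-C-}$ stages and the two $\text{-CZ-}$ stages.

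For the two $\text{-C-}$ stages I would invoke the depth-$5n$ LNN synthesis of an arbitrary linear reversible function \cite{ar:kms}. For each $\text{-CZ-}$ stage I would use Theorem~\ref{thm:2}, which realizes $\widehat{\text{-CZ-}}$ — a $\text{-CZ-}$ together with the complete qubit reversal — in $\CNOT$-depth $2n+2$. The key structural point is that substituting $\widehat{\text{-CZ-}}$ for a bare $\text{-CZ-}$ injects a qubit reversal $R$ at each of the two $\text{-CZ-}$ positions. I would dispose of these reversals by using that $R$ is an involution: realizing both central stages as $\widehat{\text{-CZ-}}$ makes the two reversals compose to the identity, with the intervening $\text{-P-H-P-}$ merely relabeled at zero two-qubit cost; alternatively each $R$ folds into an adjacent $\text{-C-}$ stage, which remains a linear reversible map and hence stays within the depth-$5n$ budget. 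Because the target $\text{-CZ-}$ pattern on each side is freely selectable inside its $\widehat{\text{-CZ-}}$ — reversal only permutes which $\CZ$ gates are applied — the phase polynomial prescribed by the decomposition is still matched exactly.

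Summing the stage depths gives $2\cdot 5n + 2\cdot(2n+2)$, and the remaining work, which I expect to be the main obstacle, is to sharpen this to the claimed $14n-4$ rather than the naive $14n+4$. This requires a careful layer-level analysis at the two interfaces where a $\text{-C-}$ stage meets a $\widehat{\text{-CZ-}}$ stage: one must verify that the constant-depth overhead sitting at the boundary of the reversal construction of Theorem~\ref{thm:2} can be merged into the flanking depth-$5n$ linear-reversible circuit without increasing its depth, and that the freedom to choose the input and output qubit orderings lets one absorb the outermost boundary layers for free. Tracking these $O(1)$ contributions precisely, and confirming that they net a saving of eight two-qubit layers across the two interfaces, is the delicate part; the asymptotic $14n$ term itself follows immediately from the stage-by-stage bounds above.
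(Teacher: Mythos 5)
Your route is the same as the paper's: start from -H-C-CZ-P-H-P-CZ-C-, realize each -CZ- as $\widehat{\text{-CZ-}}$ via Theorem \ref{thm:2} so that the two qubit reversals cancel against each other (the intervening single-qubit -P-H-P- layer being merely relabeled), charge depth $5n$ to each -C- stage by \cite{ar:kms} and $2n+2$ to each $\widehat{\text{-CZ-}}$ stage, and arrive at $2\cdot 5n+2(2n+2)=14n+4$. Up to that point the argument is sound and matches the paper's.

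The step you defer as ``the remaining work'' is, however, exactly what separates $14n-4$ from $14n+4$, so as written there is a genuine gap: you assert that eight layers can be absorbed at the two interfaces between -C- and $\widehat{\text{-CZ-}}$ stages without exhibiting a mechanism. The paper's mechanism is specific to the structure of Theorem \ref{thm:2}: the $\widehat{\text{-CZ-}}$ circuit is a chain of depth-$4$ $\CNOT$ blocks $S$ with Phase gates interleaved only \emph{between} blocks, and the Phase gates belonging to the single-literal terms $u_j y_j$ may be placed either before the first block or after the last. Deferring them to the end makes the first block $S$ of the first $\widehat{\text{-CZ-}}$ stage a pure $\CNOT$ sub-circuit with nothing preceding it inside the stage; composing it with the adjacent -C- stage yields a single linear reversible map, still implementable in depth $5n$, which deletes exactly $4$ two-qubit layers. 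Running the second $\widehat{\text{-CZ-}}$ stage in reverse, with its single-literal phases at the front, lets its final block $S$ merge with the following -C- stage for another $4$ layers, giving $14n+4-8=14n-4$. Note also that your alternative of folding the whole reversal $R$ into an adjacent -C- stage saves nothing by itself: $R$ is already accounted for inside the $2n+2$ bound, and the saving comes from peeling off a phase-free depth-$4$ boundary block, not from relocating the reversal.
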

\begin{proof}
Firstly, observe that -H-C-CZ-P-H-P-CZ-C $=$ -H-C$\widehat{\text{-CZ-}}$P-H-P$\widehat{\text{-CZ-}}$C-. This is because both $\widehat{\text{-CZ-}}$ stages reverse the order of qubits, and therefore the effect of the qubit reversal cancels out. The two-qubit gate depth of the -C- stage is $5n$ \cite{ar:kms}, and the two-qubit gate depth of the $\widehat{\text{-CZ-}}$ stage is $2n{+}2$, per Theorem \ref{thm:2}.  This means that the overall two-qubit gate depth is $14n{+}4$.  This number can be reduced somewhat by the following two observations.  Name individual stages in the target decomposition as follows, -H-C$_1$$\widehat{\text{-CZ$_1$-}}$P-H-P$\widehat{\text{-CZ$_2$-}}$C$_2$-. 
Using the construction in Theorem \ref{thm:2}, we can implement $\widehat{\text{-CZ$_1$-}}$ without the first $S$ circuit through applying Phase gates at the end of it (see Fig.~\ref{fig:ccz} for illustration).  The first $S$ circuit can then be combined with the -C$_1$- stage preceding it.  This results in the saving of $4$ layers of two-qubit computations.  Similarly, $\widehat{\text{-CZ$_2$-}}$ can be implemented up to $S$ if it is implemented in reverse, and phases are applied in the beginning (the end, but invert the circuit).  This allows to merge depth-$4$ computation $S$ with the stage -C$_2$- that follows.  These two modifications result in the improved depth figure of $14n{-}4$.  
\end{proof}

Observe how the aggregate contribution to the depth from both -CZ- stages used in our construction, ${\sim}4n$, is less than that from a single -C- stage, $5n$.  The result of \cite{ar:kms} can be applied to the 11-stage decomposition -H-C-P-C-P-C-H-P-C-P-C- of \cite{ar:ag} to obtain a two-qubit gate depth-$25n$ LNN-executable implementation of an arbitrary stabilizer unitary.  In comparison, our reduced 8-stage decomposition -H-C-CZ-P-H-P-CZ-C- allows execution in the LNN architecture in only $14n{-}4$ two-qubit stages.

\section{Stabilizers and the symplectic group}

We now establish a normal form for stabilizer circuits that eliminates two of the layers of the $11$-layer form given in \cite{ar:ag}, while using same types of layers.  As already mentioned, the stabilizer circuits form a finite group which, modulo the group that is generated by the center and the Pauli subgroup, is isomorphic to the binary symplectic group defined as follows (see also \cite{ar:crss,ar:crss2} and \cite[Chap.~2]{ATLAS}): 

\begin{definition}\label{def:symplectic}
The group $\Sp$ of symplectic matrices of size $2n \times 2n$ with entries over the finite field $\F_2 =\{0,1\}$ is defined as 
$\Sp := \{ A \in \GL : A^t J A = J \}$, where $J = \left[\begin{smallmatrix} \zeromat_n & \onemat_n \\ \onemat_n & \zeromat_n \end{smallmatrix}\right]$, and $\onemat_n$ and $\zeromat_n$ denote the identity matrix and the all zero matrix of size $n\times n$ (the subscript may furthermore be dropped when it is clear what the dimension is; $\zeromat$ may furthermore be used to denote a rectangular matrix), respectively.
\end{definition}

Similarly to \cite{ar:ag} we can work with a tableau representation for symplectic matrices, where we omit column vector $r$ as in \cite{ar:ag}, which corresponds to an overall sign that, if needed, can be obtained via a single layer of $Z$ gates. Definition \ref{def:symplectic} implies that the square block matrix $M=\begin{pmatrix} A & B \\ C & D \end{pmatrix}$ is symplectic if and only if the following four conditions hold: 
\begin{eqnarray}\label{eq:cols}
&& A^tC=C^tA, \quad A^tD+C^tB=\onemat_n, \nonumber \\ 
&& B^tD=D^tB, \quad B^tC+D^tA=\onemat_n.
\end{eqnarray}
In other words, two columns $c_i$ and $c_j$ of $M$ are perpendicular with respect to the symplectic inner product, unless they form one out of $n$ symplectic pairs $(c_i, c_{n+i})$, where $i=0,1,...,n{-}1$, and in which case the symplectic inner product evaluates to $1$. It should be noted that if $M$ is symplectic, so is $M^{-1}$, as the symplectic matrices form a group. As $M^{-1} = \begin{pmatrix} D^t & -B^t \\ -C^t & A^t \end{pmatrix}$, the equation $(M^{-1})^t J M^{-1} = J$ implies that 
the following four conditions hold for $M$ as well: 
\begin{eqnarray}\label{eq:rows}
&& AB^t=BA^t, \quad AD^t+BC^t=\onemat_n, \nonumber \\
&& CD^t=DC^t, \quad CB^t+DA^t=\onemat_n. 
\end{eqnarray}
In other words, also two rows  $r_i$ and $r_j$ of $M$ are perpendicular with respect to the symplectic inner product, unless they form one out of $n$ symplectic pairs $(r_i, r_{n+i})$, where $i=0,1,...,n{-}1$, and in which case the symplectic inner product evaluates to $1$. 

Equations (\ref{eq:cols}) and (\ref{eq:rows}) will be useful later when we bring a given stabilizer circuit, represented as a symplectic matrix, into a suitable normal form. 

The right side action of the stabilizer circuit layers -H-, -P-, and -C- on a symplectic matrix $M$ can be described as follows (see also \cite{ar:ag}): 
\begin{itemize}
\item Right multiplication with a Hadamard gate on qubit $k$ corresponds to exchanging columns $k$ and $n{+}k$ of $M$. 
\item Right multiplication with a Phase gate on qubit $k$ corresponds to the addition modulo 2 of column $k$ of $M$ to column $n{+}k$;
\item Right multiplication with a $\CNOT$ gate with control $j$ and target $k$, $1 \leq j,k \leq n$, corresponds to the addition modulo 2 of column $j$ to column $k$ of $M$ and the addition modulo 2 of column $n{+}k$ to column $n{+}j$ of $M$.
\end{itemize} 
Similarly, the left side action on the rows of $M$ can be defined. 

\begin{figure*}
\includegraphics[width=\textwidth]{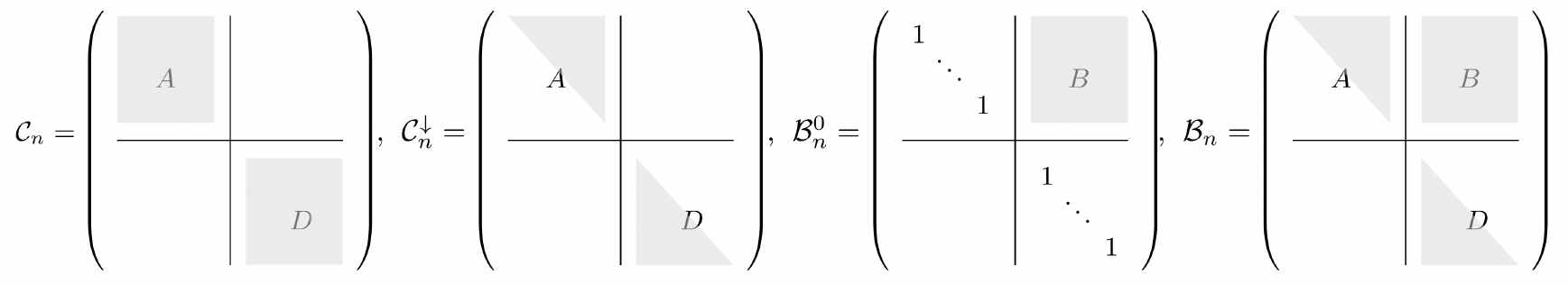}
\caption{\label{fig:Bruhat} Several important subgroups of the stabilizer group on $n$ qubits and a visualization of the structure of the corresponding symplectic matrices: ${\cal C}_n$ corresponds to the group generated by all $\CNOT$ gates. The block $D$ here is equal to $(A^{-1})^t$. If we consider only such $\CNOT(x;y)$ gates where the control $x$ and the target $y$ labels satisfy $x \leq y$, we obtain the subgroup ${\cal C}_n^\downarrow$. In circuit terms, this means that the circuits in ${\cal C}^\downarrow_n$ may be written with only $\CNOT$ gates where the targets are lower than the controls.  The circuits in ${\cal B}^0_n$ correspond to those that can be written in the form ${c}^\downarrow_n p {{c}^\downarrow_n}^{-1}$, where ${c}^\downarrow_n \in {\cal C}^\downarrow_n$ and $p$ is a layer of Phase gates. Necessarily, block $B$ must then be symmetric, {\em i.e.}, $B=B^t$.  Finally, ${\cal B}_n$ denotes a Borel subgroup which is a maximal solvable subgroup of the stabilizer group. Again, matrix block equality $D=(A^{-1})^t$ holds and furthermore $A B^t = B^t A$.  Note that any element $b_n\in{\cal B}_n$ can be written as $b_n = c^\downarrow_n b^0_n$, where $c^\downarrow_n \in {\cal C}^\downarrow_n$ and $b^0_n \in {\cal B}^0_n$.  Note further that ${\cal C}_n$ and ${\cal B}_n$ have $n^2$ Boolean degrees of freedom, whereas ${\cal C}^\downarrow_n$ and ${\cal B}^0_n$ have $n^2/2$ Boolean degrees of freedom.}
\end{figure*}

\section{BN-pairs and Bruhat decomposition}

A property of the symplectic group that we exploit to show an asymptotically optimal decomposition is that this group can be written as a disjoint union 
\begin{equation}\label{eq:bruhat}
\Sp = \bigsqcup_{w \in W} B w B,
\end{equation}
where $B$ is the Borel subgroup of $\Sp$ and $W$ labels a system of representatives of the Weyl group of $\Sp$. For complex Lie group this decomposition is also known as the Bruhat decomposition \cite{Bourbaki:68}. However, even over a finite field such as $\F_2$ the decomposition eq.~(\ref{eq:bruhat}) can be suitably defined using the notion of BN-pairs \cite[Chap.~14]{Aschbacher:2000}, \cite{Tits:74}. As we will see below, we can identify $B$ with a subgroup of $\Sp$ that is isomorphic to a subgroup of the upper triangular matrices, and we can identify $W$ with a wreath product of $\Z_2$ with $S_n$ which corresponds to the group generated by all qubit permutations together with all possible Hadamard gate combinations on $n$ qubits.  

\begin{definition}(BN pair)
Let $G$ be a group and $B, N \subseteq G$ be two subgroups such that $G = \langle B, N \rangle$ and $T := B \cap N$ is a normal subgroup of $N$. 
Let $S$ be a set of generators for $W := N/T$. Denote by $C(w) = BwB$ the double coset corresponding to the representative $w \in W$. If the following two properties hold for all $s \in S$ and all representatives $w \in W$
\begin{itemize}
\item $C(s) C(w) \subseteq C(w) \cup C(sw)$,
\item $s B s^{-1} \not\subseteq B$,
\end{itemize}
then $(B,N)$ is called a {\em BN-pair} and the data $(G, B, N, S)$ is called a {\em Tits system}, see also \cite{Aschbacher:2000,Bourbaki:68,Tits:74}. 
\end{definition}

For the group $G=\Sp$ the subgroup $B$ can be identified with the set ${\cal B}_n$ defined in Fig.~\ref{fig:Bruhat}. Completing the description of the BN-pair in case of $\Sp$ we have to determine the subgroups $N$, $T$, and $W$. 

In case of the finite field $\F_2$ it turns out that $T$ is trivial and $N$ consists of the group generated by all permutation matrices and all Hadamard gates. This means that a set $S$ of generators for $W$ can be defined as 
\begin{eqnarray}\label{eq:gens}
S &:=&\left\{\left[\begin{array}{cccc} 
\zeromat_{k}&&\onemat_k&\\
&\onemat_{n-k} && \zeromat_{n-k}\\
\onemat_{k}&&\zeromat_k&\\
&\zeromat_{n-k} && \onemat_{n-k}
 \end{array}\right] : k =0..n \right\} \nonumber \\
&\bigcup& 
\left\{\left[\begin{array}{cc} 
\tau_i & \\ & \tau_i \end{array} 
\right] : \tau_i = (i, i{+}1), i =1..n{-}1 \right\}.
\end{eqnarray}
The first set in eq.~(\ref{eq:gens} corresponds to the tensor products of Hadamard matrices, namely $W=\{w_k: k=0,1,...,n\}$, where $w_k = H_2^{\otimes k} \otimes \onemat_2^{\otimes n-k}$, whereas the second set corresponds to wire permutations of adjacent wires. Furthermore, we note that 
\begin{eqnarray*}
{\cal B}_n &=& \left\{\left[\begin{array}{cc} A & \zeromat_n \\ \zeromat_n & (A^t)^{-1}\end{array}\right] \left[\begin{array}{cc} \onemat_n & B \\ \zeromat_n & \onemat_n \end{array}\right] \right.\\
&& \left. : A \in \GL, B \in \F_2^{n\times n}, B=B^t \right\},
\end{eqnarray*}
which implies that ${\cal B}_n$ is isomorphic to a subgroup of the upper triangular matrices, {\em i.e.}, in particular, it is a solvable group. This decomposition also implies that there are $n^2/2$ Boolean degrees of freedom in the part corresponding to $\left[\begin{array}{cc} A & \zeromat_n \\ \zeromat_n & (A^t)^{-1}\end{array}\right]$ and $n^2/2$ Boolean degrees of freedom in the part corresponding to $\left[\begin{array}{cc} \onemat_n & B \\ \zeromat_n & \onemat_n \end{array}\right]$ as $B$ is symmetric. Hence, matrices in ${\cal B}_n$ have an overall of $n^2$ Boolean degrees of freedom. 

Finally, note that the elements of the form $\diag(\tau,\tau)$ stabilize the set ${\cal B}^0_n$ as they leave the diagonal part invariant and map the set of symmetric matrices into itself.

\section{Computing the Bruhat decomposition}\label{sec:6}

We first state two lemmas that will be useful later for a step-wise decomposition of a given stabilizer circuit. 

\begin{lemma}[\cite{ar:ag}]\label{lem:symdec} For any symmetric matrix $A\in \F_2^{n \times n}$ there exist
matrices $\Lambda, U\in \F_2^{n\times n}$ such that $A = U U^t + \Lambda$, where $\Lambda$ is diagonal and $U$ invertible and upper triangular.
\end{lemma}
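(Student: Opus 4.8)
The plan is to prove this by induction on $n$, building $U$ one row and column at a time in a manner reminiscent of a Cholesky factorization, but exploiting two features special to $\F_2$. First, over $\F_2$ an invertible upper triangular matrix necessarily has all diagonal entries equal to $1$, so ``upper triangular and invertible'' is the same as ``unit upper triangular''; in particular any $U$ I build this way is automatically invertible and I never have to verify a nondegeneracy condition along the way. Second, since $\Lambda$ is an arbitrary diagonal matrix, I only need to arrange that the \emph{off-diagonal} entries of $UU^t$ agree with those of $A$, the diagonal discrepancy being absorbed into $\Lambda$ at the very end. This second point is exactly what rescues the construction over $\F_2$, where a symmetric matrix with zero diagonal cannot in general be written as $UU^t$ on the nose.

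For the inductive step I would peel off the last row and column. Write
\[
A=\begin{pmatrix} A' & b \\ b^t & a\end{pmatrix}, \qquad U=\begin{pmatrix} U' & b \\ \zeromat & 1\end{pmatrix},
\]
where $A'$ is the symmetric $(n{-}1)\times(n{-}1)$ leading block, $b\in\F_2^{n-1}$, $a\in\F_2$, and where I have already committed to placing $b$ itself into the off-diagonal part of the last column of $U$. The reason for this choice is that, with $U$ unit upper triangular, the last row of $U$ equals $e_n$, so the off-diagonal entries of the last column of $UU^t$ coincide with the corresponding entries of $U$; setting them to $b$ matches the last column of $A$ unconditionally. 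A direct block multiplication then gives
\[
UU^t=\begin{pmatrix} U'U'^t + bb^t & b \\ b^t & 1\end{pmatrix}.
\]

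The leading block of $UU^t$ is $U'U'^t + bb^t$, so to match the off-diagonal part of $A'$ I need $U'U'^t$ to agree off-diagonally with the symmetric matrix $A'+bb^t$. This is precisely the crux of the argument and the one place requiring care: the recursion must be applied not to $A'$ but to the corrected matrix $A'+bb^t$, which is again symmetric and of size $(n{-}1)\times(n{-}1)$. By the induction hypothesis there exist a unit upper triangular $U'$ and a diagonal $\Lambda'$ with $A'+bb^t=U'U'^t+\Lambda'$. Substituting back and using $2\,U'U'^t=0$ over $\F_2$, the matrix $UU^t+A$ has vanishing off-diagonal entries by construction, so setting $\Lambda:=UU^t+A=\diag(\Lambda',\,a+1)$ yields $A=UU^t+\Lambda$ with $U$ invertible and upper triangular, as required. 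The base case $n=1$ is immediate: take $U=(1)$ and $\Lambda=(a+1)$. I expect no serious obstacle beyond correctly identifying the $bb^t$ correction term that feeds the recursion; everything else reduces to routine $\F_2$ block-matrix computations.
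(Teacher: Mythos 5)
Your proof is correct, but it takes a different route from the paper. The paper does not give a self-contained argument at all: it cites Aaronson--Gottesman for the lower-triangular decomposition $M = LL^t+\Lambda$ and then observes that conjugating by the order-reversing permutation (swapping rows $(1,n),(2,n-1),\dots$) turns $L$ into an upper-triangular $U$, yielding $M=UU^t+\Lambda'$. You instead prove the upper-triangular statement directly by induction on $n$, peeling off the last row and column, forcing the last column of $U$ to be $\binom{b}{1}$ so that the off-diagonal part of the last column of $UU^t$ matches $A$ for free, and recursing on the corrected symmetric matrix $A'+bb^t$; the key observations --- that invertible upper triangular over $\F_2$ means unit upper triangular, and that the diagonal discrepancy can always be dumped into $\Lambda$ --- are exactly right, and the identification of the $bb^t$ correction term is the genuine content of the inductive step. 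Your version buys a self-contained, constructive proof (essentially an $\F_2$ Cholesky-type elimination that also makes the algorithmic cost transparent), at the price of a page of block computation; the paper's version buys brevity by outsourcing the induction to the cited reference and only supplying the lower-to-upper conversion. Both yield the same conclusion, and I see no gap in your argument.
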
 
\begin{proof}
In \cite{ar:ag} a decomposition $M=L L^t+\Lambda$ was derived, where $L$ is lower triangular. By conjugating this expression with a permutation matrix that exchanges the rows $(1,n), (2,n-1), ...$ we see that the same proof also gives rise to a decomposition into $M=U U^t + \Lambda'$ with $U$ upper triangular and some diagonal matrix $\Lambda'$. 
\end{proof}
\begin{corollary} Any matrix in ${\cal B}_n$ can be written in the form -C-P-C-P- or alternatively in the form -P-C-P-C- with all -C- layers consisting of gates in ${\cal C}_n^\downarrow$. 
\end{corollary}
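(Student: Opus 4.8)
The plan is to combine the factorization $b_n = c^\downarrow_n b^0_n$ recorded in Figure~\ref{fig:Bruhat} with the symmetric decomposition of Lemma~\ref{lem:symdec}, turning the symmetric block $B$ of $b^0_n$ into a short alternation of triangular $\CNOT$ networks and Phase layers. First I would take an arbitrary $b_n \in {\cal B}_n$ and write it as $b_n = c^\downarrow_n b^0_n$ with $c^\downarrow_n \in {\cal C}^\downarrow_n$ and $b^0_n \in {\cal B}^0_n$. The factor $c^\downarrow_n$ is already a single -C- stage with gates in ${\cal C}^\downarrow_n$, so the whole problem reduces to expressing $b^0_n$, whose symplectic matrix is $\left[\begin{smallmatrix}\onemat & B\\ \zeromat & \onemat\end{smallmatrix}\right]$ with $B=B^t$, as a -C-P-C-P- fragment.

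Next I would apply Lemma~\ref{lem:symdec} to write $B = UU^t + \Lambda$ with $U$ upper triangular invertible and $\Lambda$ diagonal, and split
\[
\begin{pmatrix}\onemat & B\\ \zeromat & \onemat\end{pmatrix} = \begin{pmatrix}\onemat & UU^t\\ \zeromat & \onemat\end{pmatrix}\begin{pmatrix}\onemat & \Lambda\\ \zeromat & \onemat\end{pmatrix}.
\]
The diagonal factor is exactly a Phase layer $p_\Lambda$ applied to those qubits $k$ with $\Lambda_{kk}=1$. For the remaining factor I would use the conjugation identity
\[
\begin{pmatrix}\onemat & UU^t\\ \zeromat & \onemat\end{pmatrix} = \begin{pmatrix}U & \zeromat\\ \zeromat & (U^t)^{-1}\end{pmatrix}\begin{pmatrix}\onemat & \onemat\\ \zeromat & \onemat\end{pmatrix}\begin{pmatrix}U^{-1} & \zeromat\\ \zeromat & U^t\end{pmatrix},
\]
which conjugates the full Phase layer $p$ (a Phase gate on every qubit, giving block $\onemat$) by the $\CNOT$ network realizing $U$. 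Since both $U$ and $U^{-1}$ are upper triangular, both surrounding -C- stages lie in ${\cal C}^\downarrow_n$; this is precisely the shape $c^\downarrow_n\, p\, (c^\downarrow_n)^{-1}$ that defines ${\cal B}^0_n$.

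Assembling the pieces gives $b_n = c^\downarrow_n \cdot\big(c^\downarrow_n(U)\, p\, c^\downarrow_n(U)^{-1}\big)\cdot p_\Lambda$, a -C-C-P-C-P- sequence. Merging the two leading -C- stages (products of upper triangular matrices stay upper triangular, so ${\cal C}^\downarrow_n$ is closed under multiplication) yields the claimed -C-P-C-P- with every -C- stage in ${\cal C}^\downarrow_n$. For the alternative -P-C-P-C- form I would run the same construction on $b_n^{-1}\in{\cal B}_n$ and then invert the resulting circuit: reversing a -C-P-C-P- circuit and inverting each gate produces -P-C-P-C-, and this preserves membership in ${\cal C}^\downarrow_n$ (closed under inverse) and keeps Phase layers as Phase layers.

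The step I expect to be the crux is the diagonal correction $\Lambda$: over $\F_2$ a symmetric matrix need not be congruent to a diagonal one (zero-diagonal, i.e.\ alternating, forms obstruct this), so $B$ cannot in general be realized by a single conjugated Phase pattern $U D U^t$ alone. Lemma~\ref{lem:symdec} is exactly what lets us push the obstruction into the separate diagonal Phase layer $p_\Lambda$. The only other point that needs care is that all triangular factors $A$, $U$, $U^{-1}$ and their products share the single orientation realized by ${\cal C}^\downarrow_n$, which is guaranteed because Lemma~\ref{lem:symdec} delivers an upper triangular $U$.
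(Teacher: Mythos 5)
Your proof is correct and follows essentially the same route as the paper: apply Lemma~\ref{lem:symdec} to the symmetric block $B=UU^t+\Lambda$, realize $\left[\begin{smallmatrix}\onemat & UU^t\\ \zeromat & \onemat\end{smallmatrix}\right]$ as the conjugation of a full Phase layer by $\diag(U,(U^t)^{-1})$, and peel off $\Lambda$ as a separate Phase layer, with the ${\cal C}^\downarrow_n$ prefactor absorbed into the first -C- stage. The only cosmetic difference is that the paper obtains the -P-C-P-C- variant by factoring the $\Lambda$ component out on the left rather than by inverting the -C-P-C-P- circuit for $b_n^{-1}$.
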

\begin{proof}
To see this, we first apply Lemma \ref{lem:symdec} decompose a given matrix $A = \left[\begin{array}{cc} \onemat_n & B \\ \zeromat_n & \onemat_n \end{array}\right]$ into the product 
$A = \left[\begin{array}{cc} \onemat_n & U U^t \\ \zeromat_n & \onemat_n \end{array}\right]\left[\begin{array}{cc} \onemat_n & \Lambda \\ \zeromat_n & \onemat_n \end{array}\right]$. Now, the first factor can be implemented in the form -C-P-C- and we get the following overall circuit of the form -C-P-C-P- for $A$:
\[
A = 
\left[\begin{array}{cc} U & \zeromat \\ \zeromat & (U^t)^{-1} \end{array}\right]
\left[\begin{array}{cc} \onemat & \onemat \\ \zeromat & \onemat \end{array}\right]
\left[\begin{array}{cc} U^{-1} & \zeromat \\ \zeromat & U^t \end{array}\right]
\left[\begin{array}{cc} \onemat & \Lambda \\ \zeromat & \onemat \end{array}\right].
\]
Clearly, the -C- layers are in ${\cal C}_n^\downarrow$. The other decomposition -P-C-P-C- is obtained similarly, by factoring out the $\Lambda$ component on the left. 
\end{proof}

\begin{lemma}\label{lem:symplecticLPU} For any matrix $M \in \F_2^{n\times 2n}$ that is the lower $n \times 2n$ part of a $2n \times 2n$ symplectic matrix, there exist a lower triangular matrix $L$, an upper triangular matrix $U$, permutation matrices $\sigma,\tau \in S_n$, and $k$, $0 \leq k \leq n$, such that 
\[
M{=}
L \, \sigma \, \left[
\begin{array}{cc|cc}
\onemat_k & \zeromat & D_1 & D_2 \\ 
\zeromat & \zeromat & \zeromat & \onemat_{n-k} 
\end{array}
\right] \diag(\tau, \tau) \, \diag(U, (U^{-1})^t).
\]
\end{lemma}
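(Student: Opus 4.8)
The plan is to regard $M=(C\mid D)$ as a pair of $n\times n$ blocks and to read off from the target form precisely which reductions are needed. Two structural facts drive everything. First, since $M$ is the lower half of a symplectic matrix, the row conditions~(\ref{eq:rows}) give the isotropy relation $CD^t=DC^t$. Second, the rows of $M$ are linearly independent, so $\mathrm{rank}(C\mid D)=n$. I would set $k:=\mathrm{rank}(C)$; this is the $k$ appearing in the statement. Writing the asserted identity as $P=(L\sigma)^{-1}M\,g^{-1}$ with $g=\diag(\tau,\tau)\diag(U,(U^{-1})^t)=\diag(\tau U,(\tau U)^{-t})$, note that $g$ is symplectic (of the Levi type $\diag(A,(A^{-1})^t)$ with $A=\tau U$), so the task is to choose a row basis change $L\sigma$ and such a $g$ that bring $M$ to the canonical matrix $P$.

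The first step is to reduce $C$. The right factor acts on the $C$-columns as $C\mapsto C\tau U$ (a column permutation followed by upper-triangular column operations) and the left factor as $C\mapsto L\sigma C$ (row operations with a row permutation). By the generalized Bruhat (LPU) decomposition over $\F_2$ --- every matrix equals $L\,w\,U$ with $L$ lower-triangular, $U$ upper-triangular, and $w$ a partial permutation matrix of the same rank, the triangular factors being automatically unitriangular over $\F_2$ --- and since any rank-$k$ partial permutation equals $\sigma\left[\begin{smallmatrix}\onemat_k&\zeromat\\\zeromat&\zeromat\end{smallmatrix}\right]\tau$ for suitable permutations, I can choose $L,\sigma,U,\tau$ so that the $C$-block of $P$ becomes $P_C=\left[\begin{smallmatrix}\onemat_k&\zeromat\\\zeromat&\zeromat\end{smallmatrix}\right]$.

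The crucial point is that the same transformation is forced onto $D$: upper-triangular $U$ on the $C$-side becomes $(U^{-1})^t$ on the $D$-side and $\tau$ acts identically on both, so $D$ cannot be cleaned independently. Here isotropy does the work for free. Since $g$ is symplectic and $L\sigma$ only changes the row basis, the rows of $P$ again span a Lagrangian subspace, so $P_CP_D^t=P_DP_C^t$. Writing $P_D=:D'=\left[\begin{smallmatrix}D'_{11}&D'_{12}\\D'_{21}&D'_{22}\end{smallmatrix}\right]$ in $k,(n{-}k)$ block form and using $P_C=\left[\begin{smallmatrix}\onemat_k&\zeromat\\\zeromat&\zeromat\end{smallmatrix}\right]$ gives
\[
P_CP_D^t=\begin{pmatrix}(D'_{11})^t & (D'_{21})^t\\ \zeromat & \zeromat\end{pmatrix},\qquad P_DP_C^t=\begin{pmatrix}D'_{11} & \zeromat\\ D'_{21} & \zeromat\end{pmatrix},
\]
and equating forces $D'_{21}=\zeromat$ and $D'_{11}=(D'_{11})^t$ symmetric. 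Thus the lower-left block of $D'$ vanishes automatically, $D'_{11}$ becomes the symmetric block $D_1$, and full rank of $M$ forces $D'_{22}$ to be invertible.

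It remains only to turn $D'_{22}$ into $\onemat_{n-k}$, and this is where the coupling is finally harmless: $D'_{22}$ occupies exactly the last $n{-}k$ rows, where $P_C$ already vanishes, and the last $n{-}k$ columns. I would apply a further block-diagonal symplectic operation $\diag(\onemat_k,A_{22})$ together with row operations supported on those same rows; its $C$-action is $\left[\begin{smallmatrix}\onemat_k&\zeromat\\\zeromat&\zeromat\end{smallmatrix}\right]\diag(\onemat_k,A_{22})=\left[\begin{smallmatrix}\onemat_k&\zeromat\\\zeromat&\zeromat\end{smallmatrix}\right]$, so it leaves $P_C$ untouched, while the ordinary Bruhat decomposition of the $(n{-}k)\times(n{-}k)$ block reduces $D'_{22}$ to the identity, the residual top-right fill-in becoming $D_2$. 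Absorbing these extra lower-triangular, upper-triangular, and permutation factors back into $L,U,\sigma,\tau$ produces the stated form. I expect the main obstacle to be exactly this bookkeeping --- checking that the $D'_{22}$-cleanup and the reabsorption of its factors can be arranged without disturbing the reduction of $C$ --- and the idea that resolves it is that, thanks to isotropy, everything genuinely left after reducing $C$ is decoupled into the corner block on which $C$ is zero.
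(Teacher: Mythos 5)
Your overall strategy matches the paper's: a generalized $LPU$ (Bruhat) reduction of the $C$-block to the rank-$k$ pattern $\left[\begin{smallmatrix}\onemat_k&\zeromat\\\zeromat&\zeromat\end{smallmatrix}\right]$, followed by a second triangular reduction confined to the complementary corner of the $D$-block. Your explicit use of the isotropy relation $CD^t=DC^t$ to force $D'_{21}=\zeromat$, $D'_{11}$ symmetric, and $D'_{22}$ invertible is correct, and it makes precise something the paper's proof leaves implicit.

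The gap is the final step, which you defer to ``bookkeeping'': it is not bookkeeping, and as proposed it fails. Because you sort the partial permutation into canonical position \emph{before} cleaning up $D'_{22}$, the factors produced by the $D'_{22}$-reduction end up sandwiched between the permutations $\sigma_1,\tau_1$ and the triangular factors $L_1,U_1$ of the first step. To absorb a row-operation matrix $\diag(\onemat_k,X)$ into the left factor you must write $\sigma_1\diag(\onemat_k,X)$ as (lower triangular)$\cdot$(permutation) with \emph{no} residual upper-triangular part, i.e.\ $X$ must be lower triangular with respect to the row order pulled back through $\sigma_1$, up to a trailing permutation; the symmetric condition holds for $Y$ through $\tau_1$. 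These constraints are coupled by $XD'_{22}Y=\onemat_{n-k}$, and over $\F_2$ the Bruhat decomposition of $D'_{22}$ generically leaves a permutation in the middle; that middle permutation cannot be pushed into $\sigma$ or $\tau$ without turning the $\onemat_{n-k}$ corner back into a nontrivial permutation pattern (and, relatedly, ``the ordinary Bruhat decomposition reduces $D'_{22}$ to the identity'' is not quite right --- it reduces it to a permutation). The fix, which is exactly what the paper does, is to postpone the sorting: perform the second ($LPL$-type) reduction in the \emph{unsorted} coordinates, on the sub-block indexed by the zero rows $R$ and zero columns of the partial permutation pattern, choosing its triangular factors to be supported on those index sets so that they are triangular as full $n\times n$ matrices and multiply directly with $L_1$ and $U_1$; a single pair of permutations applied at the very end then sorts the $C$-block pattern and the $\onemat_{n-k}$ corner simultaneously. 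With that reordering your argument goes through.
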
 
\begin{proof}
The main idea is to use the fact that any matrix $M \in \F_2^{n \times 2n}$ can be decomposed into the product of a triangular matrix, a permutation pattern ({\em i.e.}, a matrix that has at most one non-zero entry in each row and column), and another triangular matrix. LU decomposition with pivoting is a special case of this decomposition \cite[Theorem 3.4.2]{GvL:2000}, \cite[Theorem 3.5.7]{HJ:1985}, however, in our situation we cannot assume that we know the pivoting of the matrix. Using $L$, $P$, and $U$ as shorthand for lower triangular, permutation pattern, and upper triangular matrices, it is known that all four combinations $M=LPL=LPU=UPL=UPU$ are possible, see, {\em e.g.}, \cite{Strang:2012} for a discussion. For instance, for $LPL$ we start in the upper right hand corner of $M$ and eliminate the non-zero matrix entries going down and left.  For $LPU$, we start in the upper left corner and eliminate the non-zero matrix entries going down and right.  The remaining pattern defines the $P$-part of the matrix. 

Since, by assumption, $M$ is a part of the $2n \times 2n$ symplectic matrix, we obtain that ${\rm rk}(M)=n$ which means that using an $LPU$ decomposition on the left $n{\times}n$ block of $M$ we can find $L_1$ and $U_1$ such that 
$L_1 M \diag(U_1, (U_1^{-1})^t) = 
\left[P_1 | M_1\right]$, where $P_1$ is a permutation pattern and $M_1$ is another matrix. By considering the support of $P_1$ we can define row indices $R := \{i \in \{1,2,..., n\}: (P_1)_{i,*} = 0^n \}$ and column indices $C := \{ j \in \{1,2,..., n\}: (P_1)_{*,j} = 0^n \}$. If $k := {\rm rk}(P_1)$, then clearly $|R|=|C|=n{-}k$. Using an $LPL$ decomposition on the restriction of the right block $M_1$ of this new matrix to the rows and columns in $R \times C$, we therefore obtain 
$L_2$, $L_2^\prime$, and permutation matrices $\sigma$, $\tau$ such that 
$\sigma L_2 L_1 M \diag(U_1, (U_1^{-1})^t) \diag(((L_2^\prime)^{-1})^t,L_2^\prime)\diag(\tau,\tau)= 
\left[
\begin{array}{cc|cc}
\onemat_k & \zeromat & D_1 & D_2 \\ 
\zeromat & \zeromat & \zeromat & \onemat_{n-k} 
\end{array}
\right]$ for certain $D_1 \in \F_2^{k \times k}$ and $D_2 \in \F_2^{k \times (n-k)}$.
\end{proof}

\begin{theorem}\label{thm:cpcphc}
Any Clifford circuit on $n$ qubits can written in the form -P-C-P-C-H-C-P-C-P-. 
\end{theorem}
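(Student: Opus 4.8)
The plan is to read the nine-stage form directly off the Bruhat decomposition eq.~(\ref{eq:bruhat}), with Lemma~\ref{lem:symplecticLPU} supplying the explicit factorization. Represent the given circuit by its symplectic matrix $M\in\Sp$, discarding the overall sign (which, if present, is realized by one layer of $\Z$ gates). By eq.~(\ref{eq:bruhat}) there is a Weyl representative $w$ with $M=b_1\,w\,b_2$ for some $b_1,b_2\in{\cal B}_n$. I would then expand each of the three factors into elementary layers and verify that they concatenate to exactly -P-C-P-C-H-C-P-C-P-.

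For the two Borel factors I would use the corollary to Lemma~\ref{lem:symdec}, which writes any element of ${\cal B}_n$ both as -P-C-P-C- and as -C-P-C-P-; I take $b_1$ in the form -P-C-P-C- and $b_2$ in the form -C-P-C-P-, so that the -H- layer will sit in the centre. For the middle factor, recall from eq.~(\ref{eq:gens}) that $W$ is the wreath product $\Z_2\wr S_n$; hence $w$ factors as $w=w_S\,\diag(\pi,\pi)$, namely a single layer $w_S$ of Hadamard gates (the $\Z_2^n$ component, supported on a subset $S$ of the qubits) followed by a qubit permutation $\diag(\pi,\pi)$ (the $S_n$ component). The layer $w_S$ is exactly an -H- stage, while $\diag(\pi,\pi)$, being a linear reversible map, is a -C- stage.

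Assembling the pieces left to right gives $b_1$ as -P-C-P-C-, then $w_S$ as -H-, then $\diag(\pi,\pi)$ as -C-, then $b_2$ as -C-P-C-P-. The permutation -C- immediately abuts the leading -C- of $b_2$, and the two fuse into a single -C- stage, producing -P-C-P-C-H-C-P-C-P- as required. Concretely, all of this data is output by Lemma~\ref{lem:symplecticLPU} applied to the lower $n\times2n$ half of $M$: the rank $k$ determines the support of the Hadamard layer, the permutations $\sigma,\tau$ combine into $\pi$, and the triangular matrices $L,U$ (together with the symmetric completion of the central block, which is a ${\cal B}^0_n$ factor) furnish $b_1$ and $b_2$ once the lower-half factorization is lifted to the full symplectic matrix.

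The main obstacle I expect is the bookkeeping that collapses the abstract double-coset element $b_1 w b_2$ into precisely nine layers. Three points need care: that lifting the lower-half decomposition of Lemma~\ref{lem:symplecticLPU} to the full $2n\times2n$ matrix introduces only a further symmetric (${\cal B}^0_n$) factor, which is absorbable into the neighbouring Borel layers; that the permutation component of $w$ can be pushed into a single adjacent -C- stage, so that the centre is one -H- layer and not a full Weyl element; and that the -C- stages flanking the -H- layer merge without spilling into extra stages. (The fused -C- following the -H- layer is the one stage that need not be triangular, since it carries the permutation $\pi$; this costs only $O(n\log n)$ parameters and so does not affect the asymptotically tight $2n^2$ count.) The conceptual input --- the Bruhat decomposition over $\F_2$ furnished by the BN-pair structure, and the symmetric-matrix factorization of Lemma~\ref{lem:symdec} --- is already available, so I anticipate no difficulty beyond this accounting.
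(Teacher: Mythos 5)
Your architecture is the right one and matches the paper's in its ingredients (Lemma~\ref{lem:symplecticLPU} for the lower half, Lemma~\ref{lem:symdec} and its corollary for the Borel factors, a central Hadamard--permutation element whose permutation part is absorbed into an adjacent -C- stage). But there is a genuine gap at the very first step: you take the double-coset decomposition $M=b_1\,w\,b_2$ with $b_1,b_2\in{\cal B}_n$ and $w$ a Hadamard--permutation element as given by eq.~(\ref{eq:bruhat}), and everything downstream is layer bookkeeping. In the paper this decomposition is not an input but the \emph{output}: the corollary establishing that $({\cal B}_n,N)$ is a BN-pair is itself deduced from Theorem~\ref{thm:cpcphc} (``From Theorem~\ref{thm:cpcphc} we obtain, in particular, that $B$ and $N$ generate the entire group''), so within the paper's own development your argument is circular. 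To make it non-circular you would have to import the Bruhat decomposition of $\Sp$ from the general theory of finite groups with BN-pairs and then verify that the abstract Borel subgroup is exactly the set ${\cal B}_n$ of Fig.~\ref{fig:Bruhat} and that the Weyl group is $\Z_2\wr S_n$ realized as Hadamard layers and wire permutations --- a citable but nontrivial identification that your write-up does not supply.

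The second, related gap is the sentence asserting that ``lifting the lower-half decomposition to the full $2n\times 2n$ matrix introduces only a further symmetric (${\cal B}^0_n$) factor.'' That is precisely the claim that needs proving, and it is where the symplectic relations, eqs.~(\ref{eq:cols}) and (\ref{eq:rows}), do all the work. Concretely: after Lemma~\ref{lem:symplecticLPU} normalizes $[C\,|\,D]$ to $\left[\begin{smallmatrix}\onemat_k & \zeromat & D_1 & D_2\\ \zeromat&\zeromat&\zeromat&\onemat_{n-k}\end{smallmatrix}\right]$, one must use the orthogonality of block rows to conclude $A_2=\zeromat$, $A_4=\onemat_{n-k}$ and $A_1=A_1^t$, so that a single left multiplication by an element of ${\cal B}_n^0$ clears the upper-left block; and then a second round of symplectic conditions forces $B_1'=\onemat_k$, $B_2'=\zeromat$, $B_3'=D_2^t$, with $D_1$ and $B_4'$ symmetric, so that one right multiplication by an element of ${\cal B}_n^0$ clears the rest and leaves a pure Hadamard--permutation matrix. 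Without carrying out this elimination you have not shown that the residual Weyl representative is a Hadamard layer times a permutation rather than some more general symplectic matrix, nor that the correction factors really lie in ${\cal B}_n^0$ (hence are absorbable into the flanking -P-C-P-C- layers). Once those two points are supplied, your assembly into -P-C-P-C-H-C-P-C-P-, including the fusion of the permutation with the neighbouring -C- stage, is correct and agrees with the paper's conclusion.
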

\begin{proof}
We start with the $2n\times 2n$ symplectic matrix $M$ of the form 
\[ 
M = \left[ \begin{array}{c|c}
A & B  \\ \hline \\[-2ex]
C & D \end{array}\right],
\]
where $A$, $B$, $C$, and $D$ are in $\F_2^{n\times n}$. We next give an algorithm that synthesizes $M$ in a canonical form.  The algorithm proceeds in several steps, by clearing out the entries of $M$ via left-hand and right-hand multiplications by other matrices, until finally only a permutation matrix remains, which then corresponds to a Hadamard layer up to a permutation of qubits. 

Step 1.  We apply Lemma \ref{lem:symplecticLPU} to the submatrix $\left[ C | D \right]$.  Note that since $M \in \Sp$ we have $C^t D = D^t C$, {\em i.e.}, the conditions to the lemma are satisfied and we can find a lower triangular matrix $L \in \GL$ and an upper triangular matrix $U \in \GL$ and two permutation matrices $\sigma, \tau \in S_n$ such that 
\begin{align}
& \sigma L 
[ C | D ] 
\left[
\begin{array}{cc}
U & \zeromat \\ \zeromat & (U^t)^{-1}
\end{array}
\right]
\left[
\begin{array}{cc}
\tau & \zeromat \\ \zeromat & \tau
\end{array}
\right] \nonumber\\
& = 
\left[
\begin{array}{cc|cc}
\onemat_k & \zeromat & D_1 & D_2 \\ 
\zeromat & \zeromat & \zeromat & \onemat_{n-k} 
\end{array}
\right], \nonumber
\end{align}
where $0 \leq k \leq n$, $D_1 \in \F_2^{k \times k}$, $D_2 \in \F_2^{k \times (n-k)}$, and $\zeromat$ denotes all zero-matrices of the appropriate sizes.  Application of these operations to the initial matrix forces some simplifications: 
\begin{align}
& \phantom{=} \left[
\begin{array}{c|c}
\sigma & \zeromat \\ \zeromat & \sigma 
\end{array}
\right]
\left[
\begin{array}{c|c}
(L^t)^{-1} & \zeromat \\ \zeromat & L
\end{array}
\right]
M 
\left[
\begin{array}{c|c}
U & \zeromat \\ \zeromat & (U^t)^{-1} 
\end{array}
\right]
\left[
\begin{array}{c|c}
\tau & \zeromat \\ \zeromat & \tau
\end{array}
\right] \nonumber \\
& = 
\left[
\begin{array}{cc|cc}
 A_1 & A_2 & B_1 & B_2  \\
 A_3 & A_4 & B_3 & B_4  \\
\onemat_k & \zeromat & D_1 & D_2 \\ 
\zeromat & \zeromat & \zeromat & \onemat_{n-k} 
\end{array}
\right] =: M_1. \nonumber
\end{align}
Here, $A_2=\zeromat$ (implying $A_4=\onemat_{n-k}$) and $A_1$ is symmetric because of the symplectic condition between the last two block rows and the first two block rows of this matrix.

Step 2. We left multiply the matrix $M_1$ by a matrix in ${\cal B}_n^0$, as follows: 
\begin{align}
& \phantom{=} \hspace{-2mm}\left[
\begin{array}{cccc}
\onemat_k & \zeromat    & A_1 & A_3^t \\ 
\zeromat & \onemat_{n-k} & A_3 & \zeromat \\ 
\zeromat & \zeromat & \onemat_k & \zeromat \\
\zeromat & \zeromat & \zeromat & \onemat_{n-k}
\end{array}
\right] \hspace{-1mm}
\left[
\begin{array}{cccc}
 A_1 & \zeromat & B_1 & B_2  \\
 A_3 & \onemat_{n-k} & B_3 & B_4  \\
\onemat_k & \zeromat & D_1 & D_2 \\ 
\zeromat & \zeromat & \zeromat & \onemat_{n-k} 
\end{array}
\right] \nonumber \\
& =
\left[
\begin{array}{cccc}
 \zeromat & \zeromat & B_1^\prime & B_2^\prime  \\
 \zeromat & \onemat_{n-k} & B_3^\prime & B_4^\prime  \\
\onemat_k & \zeromat & D_1 & D_2 \\ 
\zeromat & \zeromat & \zeromat & \onemat_{n-k} 
\end{array}
\right] =: M_2. \nonumber
\end{align}
Note that since $A_1$ is symmetric the matrix $\left[\begin{smallmatrix} A_1 & A_3^t \\ A_3 & \zeromat \end{smallmatrix}\right]$ is symmetric as well.  We can apply Lemma \ref{lem:symdec} to obtain a decomposition of this upper triangular symplectic matrix applied from the left as -P-C-P-C-, where all -C- layers are in ${\cal C}_n^\downarrow$. 

Step 3. Note that because of the symplectic condition between columns one and three of $M_2$ we must have $B_1^\prime = \onemat_k$ and $D_1$ is symmetric.  Similarly, the symplectic condition between columns two and four of $M_2$ implies that $B_2^\prime = \zeromat$ and $B_4^{\prime}$ is symmetric.  Moreover, by considering the symplectic condition between rows two and three, which needs to be zero, we obtain that $B_3^\prime = D_2^t$. We can therefore apply a final column operation to $M_2$ to clear out the remaining entries by multiplying on the right

\begin{align}
& \phantom{=}  
\left[
\begin{array}{cccc}
 \zeromat & \zeromat & \onemat_k & \zeromat  \\
 \zeromat & \onemat_{n-k} & D_2^t & B_4^\prime  \\
\onemat_k & \zeromat & D_1 & D_2 \\ 
\zeromat & \zeromat & \zeromat & \onemat_{n-k} 
\end{array}
\right] 
\left[
\begin{array}{cccc}
\onemat_k & \zeromat    & D_1 & D_2 \\ 
\zeromat & \onemat_{n-k} & D_2^t & B_4^\prime \\ 
\zeromat & \zeromat & \onemat_k & \zeromat \\ 
\zeromat & \zeromat & \zeromat & \onemat_{n-k} 
\end{array}
\right] \nonumber \\
& = 
\left[
\begin{array}{cccc}
 \zeromat & \zeromat & \onemat_k & \zeromat  \\
 \zeromat & \onemat_{n-k} & \zeromat & \zeromat  \\
\onemat_k & \zeromat & \zeromat & \zeromat \\ 
\zeromat & \zeromat & \zeromat & \onemat_{n-k} 
\end{array}
\right] =: M_3. \nonumber
\end{align}
As in Step 2, the symmetric (this follows from its block expression and the previously established notion that both $D_1$ and $B_4^{\prime}$ are symmetric) matrix $\left[\begin{smallmatrix} D_1 & D_2 \\ D_2^t & B_4^\prime \end{smallmatrix}\right]$ can be decomposed using Lemma \ref{lem:symdec} to obtain a representation of the overall upper triangular matrix applied from the right in the form -P-C-P-C-, where again all -C- layers are in ${\cal C}_n^\downarrow$. 

The final matrix $M_3$ corresponds to a sequence of Hadamard gates applied to the first $k$ qubits. 

Overall, we applied the sequence 
\[
U_1 \pi_1 T_1 M T_2 \pi_2 U_2 = H,
\]
where $H$ is a product of Hadamard matrices applied to the first $k$ basis states, $T_1, T_2\in {\cal C}_n^\downarrow$, $\pi_1,\pi_2 \in S_n$, and $U_1, U_2 \in {\cal B}_n^0$. Multiplication by inverses from both sides yields 
\[
M = T_1^{-1} \pi_1^{-1} U_1^{-1} H U_2^{-1} \pi_2^{-1} T_2^{-1}.
\]
Now, notice that permutations stabilize ${\cal B}_n^0$, {\em i.e.}, we can find $V_1, V_2 \in {\cal B}_n^0$ such that 
\[
M = T_1^{-1} V_1 \pi_1^{-1} H \pi_2^{-1} V_2 T_2^{-1}.
\]
Note that $V_1$ is of the form -C-P-C-P- with the first -C- layer in ${\cal C}^\downarrow$, {\em i.e.}, $T_1^{-1}$ and $V_1$ can be combined into one matrix $W_1 \in {\cal B}_n$.  Similarly, $V_2$ can be written in the form -P-C-P-C- and therefore $V_2$ and $T_2^{-1}$ can be combined into one matrix $W_2 \in {\cal B}_n$.  Note finally that we can implement $\pi_1^{-1} H \pi_2^{-1}$ using a single layer of Hadamard gates $H_1$ acting non-trivially on some $k$ qubits, and merge the qubit swapping stage with either $W_1$ or $W_2$. Overall, we have that $M$ can be written as
\[
M = W_1 H \pi W_2 \in \text{-C-P-C-P-H-P-C-P-C-.}
\]
Since -C-P-C-P- circuit can be written as a -P-C-P-C- circuit, the claimed decomposition follows. 
\end{proof}

Combining the results of Theorems \ref{thm:1} and \ref{thm:cpcphc}, and Corollary \ref{col:2} allows to obtain the main result of this paper,

\begin{corollary}\label{cor:main}
An arbitrary stabilizer circuit can be written as a 7-stage layered decomposition -C-CZ-P-H-P-CZ-C-.  It is executable in the LNN architecture as a two-qubit gate depth-$(14n-4)$ circuit. 
\end{corollary}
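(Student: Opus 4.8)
The plan is to obtain the seven-stage form by folding the two-qubit content of the nine-stage decomposition of Theorem \ref{thm:cpcphc}, and then to read off the depth directly from Corollary \ref{col:2}. I would start not from the stated form -P-C-P-C-H-C-P-C-P- but from the equivalent arrangement -C-P-C-P-H-P-C-P-C-, which is the version actually constructed in the proof of Theorem \ref{thm:cpcphc} (before it is rewritten into the stated one). I prefer this form because its Hadamard layer is flanked by Phase layers on both sides, matching the P-H-P pattern demanded by the target -C-CZ-P-H-P-CZ-C-; any residual wire permutation produced alongside the $H$ is a linear reversible operation and merges into an adjacent -C- layer. I then read off the block to the left of the central -H- as -C-P-C-P- $=$ $(\text{-C-P-})^2$ and the block to its right as -P-C-P-C- $=$ $(\text{-P-C-})^2$.

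Next I would collapse each block with Theorem \ref{thm:1}. Applied in its stated direction, $(\text{-P-C-})^2 = \text{-P-CZ-C-}$, which folds the right block into -P-CZ-C-. For the left block I would invoke the mirror of Theorem \ref{thm:1}: inverting a circuit reverses the stage order while preserving layer types, since -C- and -P- layers are closed under inversion and the -CZ- layer is self-inverse. Hence a unitary lies in $(\text{-C-P-})^m$ precisely when its inverse lies in $(\text{-P-C-})^m = \text{-P-CZ-C-}$, and taking inverses again places the unitary itself in -C-CZ-P-. This folds the left block into -C-CZ-P-. Concatenating the two folded blocks around the Hadamard layer gives $M$ in the form -C-CZ-P-H-P-CZ-C-, the claimed seven-stage normal form; any overall sign is absorbed into a single layer of $Z$ gates inside one of the -P- stages and does not affect the layer structure or the two-qubit depth.

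For the depth statement I would observe that -C-CZ-P-H-P-CZ-C- is precisely the circuit handled in Corollary \ref{col:2}, save for the redundant leading -H- layer there, which is single-qubit and contributes nothing to the two-qubit depth. I would therefore reuse that argument: replace each -CZ- stage by $\widehat{\text{-CZ-}}$ so that, exactly as in Corollary \ref{col:2}, both attached qubit reversals reverse the order of qubits and their effect cancels, yielding -C$\widehat{\text{-CZ-}}$P-H-P$\widehat{\text{-CZ-}}$C-; apply Theorem \ref{thm:2} to bound each $\widehat{\text{-CZ-}}$ stage by two-qubit depth $2n{+}2$ and the result of \cite{ar:kms} to bound each -C- stage by $5n$; and merge the boundary depth-$4$ $S$ circuits of the two $\widehat{\text{-CZ-}}$ stages into the adjacent -C- stages. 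This reduces the naive total $14n{+}4$ to $14n{-}4$ over the LNN architecture, completing the proof.

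The main obstacle, and the only place requiring genuine care, is orientation bookkeeping: one must fold the left and right blocks with the correct (forward versus mirrored) instance of Theorem \ref{thm:1} so that the two -CZ- stages emerge in the layer orders -C-CZ-P- and -P-CZ-C- respectively, and so that the qubit reversals of the two $\widehat{\text{-CZ-}}$ implementations cancel across the intervening single-qubit -P-H-P- layers rather than compound. Once this stage pattern is confirmed, the depth count is inherited wholesale from Corollary \ref{col:2}, and no new estimates are needed.
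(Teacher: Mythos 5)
Your proposal is correct and follows essentially the same route as the paper, which proves this corollary simply by combining Theorem \ref{thm:cpcphc} (in its -C-P-C-P-H-P-C-P-C- form), the folding of $(\text{-P-C-})^m$ blocks via Theorem \ref{thm:1} together with the remark that -P-CZ-C- can be rewritten as -C-CZ-P-, and the depth analysis of Corollary \ref{col:2}. Your explicit handling of the left block by inversion and your observation that dropping the leading -H- does not affect two-qubit depth are exactly the details the paper leaves implicit.
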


\begin{corollary}
Defining $B := {\cal B}_n$ and $N (=W)$ to be the group generated by -H- and all wire permutations we obtain that $B$ and $N$ define a $BN$-pair for $\Sp$. 
\end{corollary}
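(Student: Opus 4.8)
The plan is to verify, directly, the four requirements in the definition of a Tits system for $G=\Sp$, $B={\cal B}_n$, and $N$ the group generated by the Hadamard layer and all wire permutations, reading off the structural input from Theorem~\ref{thm:cpcphc}. I would first observe that $N$ consists exactly of the symplectic permutation matrices: each generator is a $2n\times 2n$ permutation matrix preserving $J$ (a wire permutation is $\diag(\pi,\pi)$, and $w_k$ is the product of the transpositions $(i,n{+}i)$, $i\le k$), and these generate all of them, so $N\cong\Z_2\wr S_n$ is the Weyl group of type $C_n$. Generation $G=\langle B,N\rangle$ is then immediate from Theorem~\ref{thm:cpcphc}, which writes every $M\in\Sp$ as $M=W_1(H\pi)W_2$ with $W_1,W_2\in{\cal B}_n=B$ and $H\pi\in N$; in fact this gives $\Sp=BNB=\bigcup_{w\in W}BwB$, the Bruhat decomposition~(\ref{eq:bruhat}).

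Next I would compute the torus $T:=B\cap N$ and show it is trivial, so that it is (vacuously) normal in $N$ and $W=N/T\cong N$. An element of $B$ has zero lower-left block and lower-triangular $A$-block, since ${\cal B}_n={\cal C}_n^\downarrow{\cal B}_n^0$; if it is also a permutation matrix then $A$ is a lower-triangular permutation, hence $A=\onemat_n$, and the residual $\left[\begin{smallmatrix}\onemat_n & B\\ \zeromat_n&\onemat_n\end{smallmatrix}\right]$ is monomial only when $B=\zeromat_n$. For the generating set of $W$ I would take the simple reflections $S=\{w_1,\tau_1,\dots,\tau_{n-1}\}$, a subset of~(\ref{eq:gens}): these already generate $W$ (every $w_k$ is a product of $\langle\tau_i\rangle$-conjugates of $w_1$) and form a Coxeter system of type $C_n$ with $w_1$ as the special node. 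Restricting to simple reflections is needed, since the longer elements $w_k$ ($k\ge2$) are not reflections and would violate the first axiom.

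The axiom $sBs^{-1}\not\subseteq B$ I would settle by exhibiting one witness per generator. Conjugating $\CNOT(i{+}1;i)\in{\cal C}_n^\downarrow\subseteq B$ by $\tau_i$ yields $\CNOT(i;i{+}1)$, whose $A$-block is no longer lower triangular, so it leaves $B$; conjugating the phase gate $\P$ on qubit $1$ (an element of ${\cal B}_n^0\subseteq B$) by $w_1=\H_1$ produces an operator with a nonzero lower-left block — on the $\Sp(2,\F_2)$ acting on the pair $\{1,n{+}1\}$ this is the passage $\left[\begin{smallmatrix}1&1\\0&1\end{smallmatrix}\right]\mapsto\left[\begin{smallmatrix}1&0\\1&1\end{smallmatrix}\right]$ — so it too leaves $B$.

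The remaining axiom $C(s)C(w)\subseteq C(w)\cup C(sw)$ is the crux. I would organize the root-subgroup structure: $B$ is the product of the positive root subgroups (the $\tau$-type $\CNOT$ transvections and the $\P$-type symmetric transvections), and for a simple reflection $s=s_\alpha$ the key lemma is that $s$ permutes $\Phi^+\setminus\{\alpha\}$ while interchanging $U_\alpha\leftrightarrow U_{-\alpha}$. This yields $[B:B\cap sBs^{-1}]=|U_\alpha|=2$ and shows $P_s:=B\cup BsB=\langle B,s\rangle$ is a subgroup, from which the axiom follows by the standard length induction: $BsBwB=BswB$ when $\ell(sw)>\ell(w)$, and $BsBwB\subseteq BwB\cup BswB$ when $\ell(sw)<\ell(w)$. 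I expect the genuinely delicate case to be the Hadamard node $w_1$: because $B$ is triangular only with respect to the symplectic splitting, conjugation by $w_1$ swaps the $X$- and $Z$-type root subgroups, so verifying that it permutes the remaining positive roots correctly — and that only the single intended block entry escapes $B$ — must be carried out using the symplectic relations~(\ref{eq:cols})–(\ref{eq:rows}); the transposition nodes $\tau_i$, by contrast, act exactly as in the $\GL_n$ (type $A$) case and are routine. With the axioms in place, the disjointness of~(\ref{eq:bruhat}) (equivalently, the uniqueness of the Hadamard rank $k$ in Theorem~\ref{thm:cpcphc}) follows formally, confirming that $(\Sp,B,N,S)$ is a Tits system.
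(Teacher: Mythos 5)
Your proposal is correct and, where it overlaps with the paper's proof, follows the same skeleton --- generation of $\Sp$ from Theorem~\ref{thm:cpcphc}, triviality of $T=B\cap N$, and explicit failure of $sBs^{-1}\subseteq B$ --- but it diverges on the one axiom that carries all the difficulty. The paper settles the coset multiplication rule $C(s)C(w)\subseteq C(w)\cup C(sw)$ by citing \cite[Chap.~V.6]{Brown:89}; you instead propose a self-contained verification via root subgroups (writing $B$ as a product of positive root subgroups, showing $[B:B\cap sBs^{-1}]=2$ for each simple reflection, and closing with the standard length induction). That route is legitimate and would make the corollary independent of the buildings literature, but note that you, too, leave the delicate computation at the Hadamard node $w_1$ --- the only node where conjugation mixes the $X$- and $Z$-type transvections and where the symplectic relations (\ref{eq:cols})--(\ref{eq:rows}) actually enter --- as a program rather than a proof, so on this axiom your argument is in its present form no more complete than the paper's citation. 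Two of your refinements genuinely improve on the paper's write-up. First, you restrict the generating set to the simple reflections $\{w_1,\tau_1,\dots,\tau_{n-1}\}$ and observe that the elements $w_k$ with $k\geq 2$ listed in eq.~(\ref{eq:gens}) are not reflections and would break the first axiom; this is right --- for instance, taking $b\in B$ to be the phase gate on qubit $1$, the element $w_2\,b\,w_1$ lies in the double coset of Hadamard rank $2$, which is neither $C(w_1)$ nor $C(w_2w_1)$, both of rank $1$ --- and the paper glosses over it. Second, your witness for $sBs^{-1}\not\subseteq B$ at the Hadamard node is the conjugated phase gate $\H_1\P_1\H_1$, whose lower-left block is nonzero; this is sharper than the paper's appeal to Hadamard not preserving directional $\CNOT$ gates, since $\H_1\CNOT(2;1)\H_1=\CZ(1,2)$ lands back inside ${\cal B}_n^0\subseteq B$, so not every directional $\CNOT$ touching qubit $1$ actually serves as a witness.
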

\begin{proof}
From Theorem \ref{thm:cpcphc} we obtain, in particular, that $B$ and $N$ generate the entire group $\Sp$. Clearly we have that $T = B \cap N$ is trivial, {\em i.e.}, it is normal in $N$.  The stated property $s B s^{-1} \not\subseteq B$ for all $s \in S$ clearly holds for our choice of the generator set $S$ in eq.~(\ref{eq:gens}), as Hadamard as well as qubit swaps do not preserve the directional $\CNOT$ gates. Finally, to establish the coset multiplication rule $C(s)C(w) \subseteq C(w) \cup C(sw)$ we use \cite[Chap.~V.6]{Brown:89}. 
\end{proof} 

\begin{corollary}
The Bruhat decomposition gives rise to an asymptotically tight parametrization of all $2^{2n^2 + O(n)}$ stabilizer circuits.
\end{corollary}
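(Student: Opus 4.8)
The plan is to establish the claim by counting, for the normal form produced in Theorem~\ref{thm:cpcphc}, the number of free Boolean parameters and comparing it against $\log_2|\Sp|$. First I would recall that Theorem~\ref{thm:cpcphc} exhibits, for every $M\in\Sp$, a factorization $M = W_1\, w\, W_2$ with $W_1,W_2\in{\cal B}_n$ and $w$ in the subgroup $W$ generated by the Hadamard layers and the wire permutations, and that this assignment is \emph{surjective} onto $\Sp$ by construction. Surjectivity is the only feature of the normal form I will need, since it supplies both the upper and the lower bound on the parameter count.

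Next I would tally the parameters. Using the decomposition ${\cal B}_n = {\cal C}_n^\downarrow\,{\cal B}_n^0$ recorded in Fig.~\ref{fig:Bruhat}, each Borel factor carries exactly $n^2$ Boolean degrees of freedom ($n^2/2$ from ${\cal C}_n^\downarrow$ and $n^2/2$ from the symmetric block of ${\cal B}_n^0$), with any internal -P- layers already subsumed in this count; hence $W_1$ and $W_2$ jointly consume $2n^2$ bits. The Weyl factor $w$ ranges over $W$ with $|W| = 2^n\, n!$, so it contributes $\log_2|W| = n + \log_2 n! = O(n\log n)$ bits. The normal form is therefore specified by $2n^2 + O(n\log n)$ Boolean parameters in total.

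Then I would compare with the order formula $\log_2|\Sp| = 2n^2 + O(n)$ from the introduction. Any surjective parametrization must use at least $\log_2|\Sp| = 2n^2 + O(n)$ bits, which is a hard information-theoretic lower bound; the explicit tally shows that the Bruhat normal form meets this with $2n^2 + O(n\log n)$ bits. Both quantities have leading term $2n^2$, so the parameter count equals $2n^2(1+o(1))$ and reproduces exactly the dominant term of $\log_2|\Sp|$, which is the assertion that the parametrization is asymptotically tight.

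I expect the only delicate point to be the treatment of the lower-order terms, not any substantive obstacle. Since the Weyl contribution $O(n\log n)$ exceeds the group's own sub-dominant term $O(n)$, I would make explicit that ``asymptotically tight'' refers to the coefficient $2$ of the $n^2$ term and not to the subleading orders. To sharpen the match I would point out that the single big cell ${\cal B}_n\, w_0\,{\cal B}_n$ attached to the longest element $w_0\in W$ already realizes $|{\cal B}_n|^2 = 2^{2n^2}$ of the $2^{2n^2+O(n)}$ group elements: over $\F_2$ the torus $T = {\cal B}_n\cap N$ is trivial, so the fiber $|{\cal B}_n \cap w_0^{-1}{\cal B}_n w_0| = |T| = 1$ and the map $(W_1,W_2)\mapsto W_1 w_0 W_2$ is a bijection onto this cell. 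Thus generic stabilizer circuits are parametrized by exactly $2n^2$ free bits with no Weyl overhead, and the over-counting that occurs when an arbitrary $M$ is written as $W_1 w W_2$ across different cells is harmless, since only a surjection is needed for the bound.
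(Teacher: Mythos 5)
Your proposal is correct and follows essentially the same route as the paper: both count the $n^2$ Boolean degrees of freedom in each of the two ${\cal B}_n$ factors of the normal form from Theorem~\ref{thm:cpcphc} and compare the total against $\log_2|\Sp| = 2n^2 + O(n)$. Your treatment is somewhat more careful than the paper's (which simply cites $n^2 + o(n^2)$ per Borel factor and stops) — in particular your explicit accounting of the $O(n\log n)$ Weyl contribution and the observation that the big cell ${\cal B}_n w_0 {\cal B}_n$ alone realizes $2^{2n^2}$ elements bijectively are welcome sharpenings, not deviations.
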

\begin{proof}
This is a direct consequence of the decomposition into layers of the form -C-P-C-P-H-P-C-P-C- proved in Theorem \ref{thm:cpcphc}. From the proof of the theorem we see that the -C-P-C-P- and the -P-C-P-C- layers correspond to the elements of ${\cal B}_n$, each of which has $n^2 + o(n^2)$ Boolean degrees of freedom.  This yields the claimed statement. 
\end{proof}

\section{Normal form for stabilizer circuits}

The Bruhat decomposition eq.~(\ref{eq:bruhat}) allows us to characterize the possible block structures that stabilizer operators might have when considered as a unitary matrix of size $2^n \times 2^n$ and how they behave under multiplication.  

\begin{definition} Let $C$ be a stabilizer circuit. Let $B \cdot w(C) \cdot B$ denote the unique double coset that $C$ lies in. Then we can represent $w(C)$ by an element in $\Z_2^n \rtimes S_n$, or equivalently by a matrix of the form $U \pi$, where $U$ is a tensor product of $k$ Hadamard matrices, where $1 \leq k \leq n$ and $\pi$ is a permutation matrix of $n$ wires. By rearranging the non-identity Hadamard operators, we can represent such an element $U\pi$ in a form $\sigma (\onemat_2^{\otimes {n-k}} \otimes H^{\otimes k}) \tau$, where $\pi=\sigma\tau$. We call $(k, \sigma, \tau)$ the block structure of $C$. 
\end{definition} 

Note that whereas $U$ and $\pi$ are unique, in general $\sigma$ and $\tau$ are not, as there is a degree of freedom corresponding to elements in $S_{n-k} \times S_k$. However, the collection $I \subset \mathbb{Z}_{2^n}^2$ defined as $I := \left\{ (i,j): |C_{i,j}| = \frac{1}{\sqrt{2}^k} \right\}$ is uniquely defined by $C$ and the corresponding block structure $(k,\sigma,\tau)$. As a corollary to Theorem \ref{thm:cpcphc} we obtain the following multiplication rule for block structures. 

\begin{corollary}
Let $C_1$ and $C_2$ be stabilizer circuits with block structures $(k_1,\sigma_1,\tau_1)$ and $(k_2, \sigma_2, \tau_2)$, respectively. Then the block structure of $C_1 C_2$ is of the form $(m,\sigma_3,\sigma_3^{-1} \sigma_1 \tau_1 \sigma_2 \tau_2)$, where $0 \leq m \leq k_1{+}k_2$ and $\sigma_3 \in S_n$. 
\end{corollary}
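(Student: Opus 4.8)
The plan is to strip off the Borel factors using Theorem~\ref{thm:cpcphc} and reduce the statement to a multiplication rule inside the Weyl group $W \cong \mathbb{Z}_2 \wr S_n$, then control the intervening Borel element with the $BN$-pair axioms established above. First I would use Theorem~\ref{thm:cpcphc} to write each factor in Bruhat form $C_i = L_i\, w_i\, R_i$, with $L_i, R_i \in {\cal B}_n = B$ and $w_i = \sigma_i\,(\onemat_2^{\otimes n-k_i}\otimes H^{\otimes k_i})\,\tau_i$ the Weyl representative of the double coset $B w_i B$ that $C_i$ lies in. Multiplying gives $C_1 C_2 = L_1\, w_1\, (R_1 L_2)\, w_2\, R_2$, and since $L_1, R_2 \in B$ the double coset of $C_1 C_2$ equals that of $w_1\, b\, w_2$ with $b := R_1 L_2 \in B$. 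So it suffices to identify the unique $w_3 \in W$ with $w_1 b w_2 \in B w_3 B$ and to read off its block data.

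The heart of the argument is the behaviour of $W$ under the projection $\phi : W \to S_n$ that forgets the Hadamard pattern, i.e. $\phi(U\pi) = \pi$. This is a group homomorphism with kernel $\mathbb{Z}_2^n$, and a direct computation in the wreath product shows $w_1 w_2 = U_3\,(\pi_1\pi_2)$ where $U_3 = U_1\cdot(\pi_1 U_2 \pi_1^{-1})$, so the underlying permutation of a clean product is $\pi_1\pi_2 = (\sigma_1\tau_1)(\sigma_2\tau_2)$ and its Hadamard weight is $|\epsilon_1 + \pi_1\cdot\epsilon_2| \le |\epsilon_1| + |\epsilon_2| = k_1 + k_2$ (here $\epsilon_i \in \mathbb{Z}_2^n$ records the Hadamard support). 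To handle the inserted factor $b$, I would analyze $w_1 b w_2 \in C(w_1)C(w_2) = B w_1 B w_2 B$: writing $w_1$ as a word in the generators $S$ of eq.~(\ref{eq:gens}) and applying the coset rule $C(s)C(w)\subseteq C(w)\cup C(sw)$ repeatedly, every double coset $C(w_3)$ that meets $C(w_1)C(w_2)$ has a representative gotten from a subword of $w_1$ followed by $w_2$. Separating the two generator types --- the adjacent transpositions $\diag(\tau_i,\tau_i)$, which carry the $S_n$-part, and the Hadamard generators $w_k$, which lie in $\ker\phi$ --- the exchange relation can only delete generators or replace $w$ by $sw$, so the Hadamard weight of $w_3$ stays bounded by $k_1+k_2$, giving $0\le m \le k_1+k_2$.

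Finally I would translate back: having fixed $m$ and the underlying permutation $\pi_3 = \sigma_1\tau_1\sigma_2\tau_2$, choose any $\sigma_3 \in S_n$ carrying the last $m$ wires onto the Hadamard support of $w_3$ and set $\tau_3 := \sigma_3^{-1}\pi_3 = \sigma_3^{-1}\sigma_1\tau_1\sigma_2\tau_2$, so that $w_3 = \sigma_3\,(\onemat_2^{\otimes n-m}\otimes H^{\otimes m})\,\tau_3$; well-definedness of $m$ follows from the uniqueness of the magnitude-support set $I$ recorded after the definition. The main obstacle is precisely the reduction of $w_1 b w_2$ in the middle step: a priori the Borel factor $b$ can carry $w_1 b w_2$ into a different double coset than $w_1 w_2$, so the delicate point is to prove that absorbing $b$ through the reflections preserves the $S_n$-projection $\phi(w_3) = \pi_1\pi_2$ (rather than only bounding the Hadamard weight). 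Because the transposition reflections and the Hadamard reflections do not commute in $W$, I expect this to be where the $BN$-pair exchange condition and the wreath-product structure must be combined most carefully, and I would treat the invariance of the permutation part under $b$ as the crux of the proof, verifying it separately for the two reflection types before assembling the general case.
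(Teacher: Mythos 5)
Your proposal follows the same basic route as the paper's proof --- reduce to a computation in the Weyl group $W \cong \mathbb{Z}_2 \wr S_n$, where the Hadamard parts combine with possible cancellation (giving $0 \le m \le k_1{+}k_2$) and the permutation parts multiply --- but you are considerably more careful about what actually has to be shown. The paper's proof simply writes $w(C_1)$ and $w(C_2)$ as words in the generators, multiplies them inside $W$ using the semidirect-product structure, and reads off the block structure, tacitly assuming that the double coset of $C_1C_2$ is governed by the product $w(C_1)w(C_2)$. You correctly observe that after writing $C_i = L_i w_i R_i$ one must control $w_1 b w_2$ for an arbitrary $b = R_1 L_2 \in {\cal B}_n$, and you explicitly flag the preservation of the $S_n$-projection, $\phi(w(C_1C_2)) = \pi_1\pi_2$, as the unproven crux. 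That diagnosis is exactly right, and it is precisely the step the paper skips.

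Unfortunately the crux you isolate is false, so the proof cannot be completed along these lines. Take $n=2$, let $\tau = (1\,2)$, and let $c \in {\cal C}_2^\downarrow \subset {\cal B}_2$ be the nontrivial directional $\CNOT$, i.e.\ $c = \diag(A,(A^{-1})^t)$ with $A$ upper unitriangular. Put $C_1 = \diag(\tau,\tau)$ and $C_2 = c\,\diag(\tau,\tau)$; both lie in ${\cal B}_2\,\diag(\tau,\tau)\,{\cal B}_2$, so $k_1=k_2=0$ and $\sigma_1\tau_1 = \sigma_2\tau_2 = \tau$, and the corollary then predicts that $C_1C_2$ has underlying wire permutation $\tau^2 = e$. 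But $C_1C_2 = \diag(\tau A\tau,\,(\tau A \tau)^{-t})$ is the opposite-direction $\CNOT$: its $A$-block is lower unitriangular, so it is not in ${\cal B}_2$, and the $LPU$ decomposition $\tau A\tau = U_1\tau U_2$ places it in ${\cal B}_2\,\diag(\tau,\tau)\,{\cal B}_2$, with underlying permutation $\tau \neq e$. So the Borel factor $b$ really can move the product into a double coset whose $S_n$-part differs from $\pi_1\pi_2$; only the weaker claims survive (the bound $0\le m\le k_1{+}k_2$, for which your subword/exchange argument is the right tool, and the existence of \emph{some} $\sigma_3,\tau_3$). The same example shows that the paper's own proof is incomplete as written, so the gap you found is not one you could have closed by handling the exchange condition more cleverly; the statement itself needs to be weakened.
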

\begin{proof}
Let $w(C_1) = U \pi$ denote the representative of $C_1$ in the Weyl group. Write $w(C_1)$ as a product over the generators $S = S_h \cup S_p$ where $S_h = \left\{h_i = \onemat_i: i \in \{1,2,...,n\}\right\}$ and $S_p = \left\{ p_i = (i,i+1) : i \in \{1,2,...,n{-}1\} \right\}$. As the Weyl group is a semidirect product we can collect the factors corresponding to $S_h$ and $S_p$ together and write $w(C_1) = \prod_{i=1}^{k_1} h_{t_{1,i}} \prod_{j=1}^{n_1} p_{s_{1,i}}$. We similarly write $w(C_2)$  and note that as $W$ is a semidirect product, we get $w(C_1)w(C_2)= 
\prod_{i=1}^{k_1} h_{t_{1,i}} \prod_{i=1}^{k_2}  h_{\pi t_{2,i}} \prod_{j=1}^{n_1} p_{s_{1,i}} \prod_{j=1}^{n_2} p_{s_{2,i}}$. As there might be cancellation between the Hadamard matrices $\prod_{i=1}^{k_1} h_{t_{1,i}}$ and $\prod_{i=1}^{k_2}  h_{\pi t_{2,i}}$, we obtain that the Hadamard block can have $m$ non-trivial factors where $0 \leq m \leq k_1{+}k_2$. The permutational parts multiply, {\em i.e.}, we conclude that the permuted block structure of the product is of the claimed form. 
\end{proof}

\section{Conclusion}

In this paper, we reduced the 11-stage computation -H-C-P-C-P-H-P-C-P-C- \cite{ar:ag} into the 9-stage decomposition -C-P-C-P-H-P-C-P-C- relying on the Bruhat decomposition of the symplectic group.  We showed that all -C- stages in our 9-stage decomposition correspond to upper triangular matrices.  This leads to an asymptotically tight parameterization of the stabilizer group, matching its number of $2^{2n^2+O(n)}$ degrees of freedom.  We then derived a 7-stage decomposition of the form -C-CZ-P-H-P-CZ-C-, that relies on the stage -CZ-, not considered by \cite{ar:ag}.  We showed evidence that the -CZ- stage is likely superior to the comparable -C- stage.  Indeed, the number of the Boolean degrees of freedom in the -CZ- stage is only about a half of that in the -C- stage, two-qubit gate counts for optimal implementations of -CZ- circuits remain smaller than those for -C- circuits (see Table \ref{tab:main}), and -CZ- computations were possible to implement in a factor of $2.5$ less depth than that for -C- stage computations over LNN architecture.  

We reported a two-qubit gate depth-$(14n-4)$ implementation of stabilizer unitaries over the gate library $\{\H,\P,\CNOT\}$, executable in the LNN architecture.  This improves previous result, a depth-$25n$ circuit \cite{ar:ag,ar:kms} executable over LNN architecture.

Our 7-stage construction can be written in $16$ different ways, by observing that -C-CZ-P- can be written in $4$ different ways: -C-CZ-P-, -C-P-CZ-, -P-CZ-C-, and -CZ-P-C-. For the purpose of practical implementation we believe a holistic approach to the implementation of the 3-layer stage -P-CZ-C- may be due, where the linear reversible function $g(x_1,x_2,...,x_n)$ is implemented by the $\CNOT$ gates such that the intermediate linear Boolean functions generated go through the set that allows implementation of the phase polynomial part.

\section{Acknowledgements}
DM thanks Yunseong Nam from IonQ for helpful discussions.  MR thanks Jeongwan Haah and Vadym Kliuchnikov from Microsoft Research for discussions.  Circuit diagrams were drawn using qcircuit.tex package, \href{http://physics.unm.edu/CQuIC/Qcircuit/}{http://physics.unm.edu/CQuIC/Qcircuit/}.


\begin{thebibliography}{10}
\bibitem{ar:ag}
S.~Aaronson and D.~Gottesman.
\emph{Improved simulation of stabilizer circuits.}
Phys. Rev. A, 70, 052328, 2004, \href{http://arxiv.org/abs/quant-ph/0406196}{\ttfamily quant-ph/0406196}.

\bibitem{ar:ammr}
M.~Amy, D.~Maslov, M.~Mosca, and M.~Roetteler.
\emph{A meet-in-the-middle algorithm for fast synthesis of depth-optimal quantum circuits}.
IEEE Transactions on Computer-Aided Design of Integrated Circuits and Systems 32(6):818--830, 2013, 
\href{https://arxiv.org/abs/1206.0758}{\ttfamily arXiv:1206.0758}.

\bibitem{Aschbacher:2000}
M.~Aschbacher. \emph{Finite Group Theory}. Cambridge University Press, 2000.

\bibitem{ar:bdsw}
C.~H.~Bennett, D.~P.~DiVincenzo, J.~A.~Smolin, and W.~K.~Wootters.
\emph{Mixed state entanglement and quantum error correction.} Phys. Rev. A 54:3824--3851, 1996, \href{https://arxiv.org/abs/quant-ph/9604024}{\ttfamily quant-ph/9604024}.

\bibitem{Bourbaki:68}
N.~Bourbaki. \emph{Elements of Mathematics -- Lie Groups and Lie Algebras, Chapters 4--6}. Springer, 1968. 

\bibitem{Brown:89}
K.~S.~Brown. \emph{Buildings}. Springer, New York, 1989. 

\bibitem{ar:crss}
A.~R.~Calderbank, E.~M Rains, P.~W.~Shor, and N.~J.~A.~Sloane.
\emph{Quantum error correction and orthogonal geometry.} Phys. Rev. Lett. 78:405--408, 1997, \href{https://arxiv.org/abs/quant-ph/9605005}{\ttfamily quant-ph/9605005}.

\bibitem{ar:crss2}
A.~R.~Calderbank, E.~M.~Rains, P.~W.~Shor, and N.~J.~A.~Sloane. \emph{Quantum Error Correction Via Codes Over GF(4)}. IEEE Transactions on Information Theory, 44(4):1369--1387, 1998, \href{https://arxiv.org/abs/quant-ph/9608006}{\ttfamily quant-ph/9608006}. 

\bibitem{ATLAS}
J.~H.~Conway, R.~T.~Curtis, S.~P.~Norton, R.~A.~Parker, and R.~A.~Wilson. \emph{ATLAS of Finite Groups}. Clarendon Press, Oxford, 1985.
 
\bibitem{ar:deb} S.~Debnath, N.~M.~Linke, C.~Figgatt, K.~A.~Landsman, K.~Wright, and C.~Monroe. \emph{Demonstration of a programmable quantum computer module.} Nature 536:63--66, 2016, \href{http://arxiv.org/abs/1603.04512}{{\ttfamily arXiv:1603.04512}}.

\bibitem{ar:ggz}
J.~Ghosh, A.~Galiautdinov, Z.~Zhou, A.~N.~Korotkov, J.~M.~Martinis, and M.~R.~Geller. \emph{High-fidelity controlled-$\sigma^{\Z}$ gate for resonator-based superconducting quantum computers.}
Phys. Rev. A 87, 022309, 2013, \href{https://arxiv.org/abs/1301.1719}{\ttfamily arXiv:1301.1719}.

\bibitem{GvL:2000}
G.~H.~Golub and Ch.~F.~van Loan. \emph{Matrix Computations}. 3rd ed. Johns Hopkins University Press, 1996. 

\bibitem{www:g} M.~Grassl. Code Tables: Bounds on the parameters of various types of codes.  \href{http://www.codetables.de/}{http://www.codetables.de/}, last accessed February 27, 2017.

\bibitem{HJ:1985}
R.~A.~Horn and Ch.~R.~Johnson. \emph{Matrix Analysis}. Cambridge University Press, 1985. 

\bibitem{www:IBM} IBM Quantum Experience, \href{http://www.research.ibm.com/quantum/}{http://www.research.ibm.com/quantum/}, last accessed February 27, 2017.

\bibitem{ar:klr}
E.~Knill, D.~Leibfried, R.~Reichle, J.~Britton, R.~B.~Blakestad, J.~D.~Jost, C.~Langer, R.~Ozeri, S.~Seidelin, and D.~J.~Wineland. \emph{Randomized benchmarking of quantum gates.} Phys. Rev. A 77, 012307, 2008, \href{https://arxiv.org/abs/0707.0963}{\ttfamily arXiv:0707.0963}.

\bibitem{KS:2015}
R.~Koenig and J.~A.~Smolin. \emph{How to efficiently select an arbitrary Clifford group element}. J. Math. Phys. 55, 122202, 2014, \href{https://arxiv.org/abs/1406.2170}{\ttfamily arXiv:1406.2170}.

\bibitem{ar:kms}
S.~A.~Kutin, D.~P.~Moulton, and L.~M.~Smithline. 
\newblock \emph{Computation at a distance.} 2007,
\href{https://arxiv.org/abs/quant-ph/0701194}{\ttfamily quant-ph/0701194}.

\bibitem{ar:m}
D.~Maslov. \emph{Basic circuit compilation techniques for an ion-trap quantum machine.} New J. Phys. 19, 023035, 2017, \href{https://arxiv.org/abs/1603.07678}{\ttfamily arXiv:1603.07678}.

\bibitem{bk:nc}
M.~A.~Nielsen and I.~L.~Chuang. 
{\em Quantum Computation and Quantum Information}, Cambridge University Press, New York, 2000.

\bibitem{ar:pmh}
K.~N.~Patel, I.~L.~Markov, and J.~P.~Hayes. \emph{Optimal synthesis of linear reversible circuits.} 
Quantum Information and Computation 8(3\&4):282--294, 2008.

\bibitem{ar:s}
A.~M.~Steane.
\emph{Quantum computing and error correction.}  2003, \href{https://arxiv.org/abs/quant-ph/0304016}{\ttfamily quant-ph/0304016}.

\bibitem{Strang:2012}
G.~Strang. \emph{Banded matrices with banded inverses and $A = LPU$}. Proc. Fifth Intl. Congress of Chinese Mathematicians (ICCM2010), American Mathematical Society, Providence, RI, pp.~771--784, 2012.

\bibitem{Tits:74}
J.~Tits. \emph{Buildings of spherical type and finite BN-pairs}. Lecture Notes in Mathematics, vol. 386, Springer, 1974. 
\end{thebibliography}
\end{document}